\documentclass[10pt]{IEEEtran}
\usepackage{epsfig,color,amsmath,cite}
\usepackage{amsthm}
\usepackage{amsmath}    
\IEEEoverridecommandlockouts
\usepackage{bm}
\usepackage{epstopdf}
\usepackage{amssymb}
\usepackage{url}
\usepackage{enumitem}
\usepackage{multirow}
\usepackage{hhline}
\usepackage{booktabs}
\usepackage{algorithm,algorithmic}	
\usepackage{comment}
\setlist[itemize]{leftmargin=*}

\DeclareMathOperator*{\minimize}{minimize}

\DeclareMathOperator*{\subjectto}{subject\ to}

\makeatother
\DeclareMathAlphabet\mathbfcal{OMS}{cmsy}{b}{n}

\newtheorem{theorem}{Theorem}
\newtheorem{mydef}{Definition}

\newtheorem{myrem}{Remark}

\newtheorem{myprs}{Proposition}


\makeatletter

\makeatother

\usepackage{stackengine}

\newcommand{\mat}[1]{\boldsymbol{#1}}

\newcommand{\bmat}[1]{\begin{bmatrix} #1 \end{bmatrix}}

\providecommand{\mA}{\ensuremath{\mat{A}}}

\providecommand{\mC}{\ensuremath{\mat{C}}}

\providecommand{\mI}{\ensuremath{\mat{I}}}

\providecommand{\mL}{\ensuremath{\mat{L}}}

\providecommand{\mO}{\ensuremath{\mat{O}}}
\providecommand{\mP}{\ensuremath{\mat{P}}}

\providecommand{\mY}{\ensuremath{\mat{Y}}}
\providecommand{\mZ}{\ensuremath{\mat{Z}}}




\newcommand{\m}{\boldsymbol}
\allowdisplaybreaks[4]
\pdfminorversion=4
\usepackage[colorlinks = true,
linkcolor = blue,
urlcolor  = blue,
citecolor = blue,
anchorcolor = blue]{hyperref}


\newcommand{\mc}[1]{\mathcal{#1}}

\usepackage[framemethod=TikZ]{mdframed}
\mdfdefinestyle{MyFrame}{%
	linecolor=black,
outerlinewidth=0.5pt,
	roundcorner=1.0pt,
	innerrightmargin=3pt,
	innerleftmargin=3pt,}
	

\usepackage[noabbrev]{cleveref}

\usepackage{mathtools}

\DeclarePairedDelimiter\abs{\lvert}{\rvert}%
\DeclarePairedDelimiter\norm{\lVert}{\rVert}%

\makeatletter
\let\oldabs\abs
\def\abs{\@ifstar{\oldabs}{\oldabs*}}
\let\oldnorm\norm
\def\norm{\@ifstar{\oldnorm}{\oldnorm*}}
\makeatother


\usepackage[english]{babel}
\usepackage[utf8]{inputenc}
\usepackage[super]{nth}

\usepackage{graphicx}
\usepackage{float}
\usepackage[caption = false]{subfig}

\title{\color{black} A Control-Theoretic Approach for Scalable and Robust Traffic Density Estimation using Convex Optimization}

\author {Sebastian~A.~Nugroho$^{*}$,~\IEEEmembership{Student Member,~IEEE,} Ahmad~F.~Taha$^{*}$,~\IEEEmembership{Member,~IEEE,} and \\Christian G. Claudel$^{\dagger}$,~\IEEEmembership{Member,~IEEE}.

	\thanks{
		*Department of Electrical and Computer Engineering, The University of Texas at San Antonio, 1 UTSA Circle, San Antonio, TX 78249.
		$^\dagger$Department of Civil, Architectural, and Environmental Engineering, The University of Texas at Austin, 301 E. Dean Keeton St. Stop C1700, Austin, TX 78712.
		Emails: sebastian.nugroho@my.utsa.edu, ahmad.taha@utsa.edu, christian.claudel@utexas.edu. This work was partially supported by the National Science Foundation under Grants 1636154, 1728629, 1917164, and 1917056.}
}

\markboth{IEEE Transactions on Intelligent Transportation Systems, In Press}{}

\begin{document}
	
\setlength{\abovedisplayskip}{3.5pt}
\setlength{\belowdisplayskip}{3.5pt}
\setlength{\abovedisplayshortskip}{3.2pt}
\setlength{\belowdisplayshortskip}{3.2pt}
	
\newdimen\origiwspc%
\newdimen\origiwstr%
\origiwspc=\fontdimen2\font
\origiwstr=\fontdimen3\font

\fontdimen2\font=0.63ex

\maketitle

\begin{abstract}
Monitoring and control of traffic networks represent alternative, inexpensive strategies to minimize traffic congestion. As the number of traffic sensors is naturally constrained by budgetary requirements, real-time estimation of traffic flow in road segments that are not equipped with sensors is of significant importance---thereby providing situational awareness and guiding real-time feedback control strategies. To that end, firstly we build a generalized traffic flow model for stretched  highways with arbitrary number of ramp flows based on the \textit{Lighthill Whitham Richards} (LWR) flow model. Secondly, we characterize the function set corresponding to the nonlinearities present in the LWR model, and use this characterization to design real-time and robust state estimators (SE) for stretched highway segments. Specifically, we show that the nonlinearities from the derived models are locally Lipschitz continuous by providing the analytical Lipschitz constants. Thirdly, the analytical derivation is then incorporated through a robust SE method given a limited number of traffic sensors, under the impact of process and measurement disturbances and unknown inputs. The estimator is based on deriving a convex semidefinite optimization problem. Finally, numerical tests are given showcasing the applicability, scalability, and robustness of the proposed estimator for large systems under high magnitude disturbances, parametric uncertainty, and unknown inputs.
\end{abstract}

\begin{IEEEkeywords}
Traffic networks, Lighthill Whitham Richards model, Greenshield fundamental diagram, Lipschitz nonlinear dynamic systems, robust state estimation, $\mathcal{L}_{\infty}$ observer.
\end{IEEEkeywords}

\section{Introduction and Paper Contributions}

\IEEEPARstart{T}{raffic}  congeston is a growing concern in most urban areas of the world. In the US alone, congestion caused a burden of more than \$300B in 2016~\cite{inrx}. At the network level, congestion occurs when the demand exceeds the transportation network's capacity. Several strategies exist to mitigate the impact of traffic congestion, including infrastructure modifications (creation of additional lanes), capacity improvements through the use of automation (cooperative cruise control or autonomous vehicles), and {traffic network control}. Among all strategies, network control strategies are the most cost-effective and easy to implement. These include for example dynamic speed limits~\cite{hegyi2010dynamic}, ramp metering~\cite{papageorgiou2002freeway}, or dynamic toll pricing~\cite{gardner2013development}.

While traffic control strategies can be effective, they require the network operator to \textit{estimate} a real-time traffic state with the highest possible accuracy---since it is financially infeasible to install sensors at each road/highway segment. The practice of running a control scheme with incorrect traffic estimates could result in a worsening of overall traffic congestion. For this reason, a significant number of control systems are open-loop, including pre-timed traffic signals~\cite{dion2004comparison}, pre-times tolling strategies, and pre-timed ramp metering.    

To perform real-time monitoring and control of traffic networks, physics-based models are needed. Among traffic flow models, we can broadly distinguish two classes: \textit{macroscopic models}~\cite{lebacque2005first}, which compute the evolution of the vehicular density, and \textit{microscopic models}~\cite{li2017vehicle}, which model the trajectory of each vehicle. Macroscopic flow models are suitable to traffic state estimation since they scale well to large networks: the computational time required to simulate traffic is independent of the number of vehicles modeled, as opposed to microscopic models. Moreover, in macroscopic model, the number of states is fixed since it depends on the highway size.

It is known that traffic density is one of major indicators that is useful for determining traffic conditions \cite{khan2017real}. To that end, here we focus on performing state estimation of traffic density on stretched highway by considering the classical macroscopic  \textit{Lighthill Whitham Richards} (LWR) flow model~\cite{Lighthill1995b,Richards1956}, which is a first order hyperbolic conservation law. This nonlinear dynamic model has been extensively used in traffic modeling for various purposes, e.g., in estimation and control applications~\cite{work2008ensemble,yuan2012real} and ramp metering~\cite{Jacquet2005,Agarwal2015}. The LWR model is indeed robust and easy to calibrate, since it only depends on a small number of well-known traffic parameters, encoded as a flow-density relationship known as the \textit{fundamental diagram}. In this paper, motivated by earlier work~\cite{agarwal2016dynamic,Contreras2016,zu2018real}, we specifically utilize the \textit{Greenshield's model}~\cite{greenshields1935study} to represent the fundamental diagram, which is a concave and parabolic flow-density relationship.

The \textit{objective} of this paper is to characterize the function sets corresponding to the nonlinearities present in the LWR model, and use this characterization to design real-time state estimators for stretched highway segments having arbitrary input/output ramp flows. Specifically, we pursue a control-theoretic approach to address the state estimation problem of highways equipped with limited number of sensors; the related literature is succinctly discussed next.

{\color{black}
Many approaches have been proposed to address traffic state estimation problems. In general, methods for traffic state estimation can be categorized into model-driven and data-driven. In model-driven traffic state estimation, statistical state estimators such as Particle Filter \cite{polson2015bayesian,xia2017assimilating,Pascale2013}, Kalman Filter \cite{Liberis2016,Timotheou2015}, {Extended Kalman Filter} (EKF) \cite{tampere2007extended,wang2003motorway,Yuan2012,Yuan2014}, {Unscented Kalman Filter} (UKF) \cite{NGODUY2008599,mihaylova2006unscented}, and {Ensemble Kalman Filter} (EnKF) \cite{work2008ensemble,xia2017assimilating,SEO2015391,Gundlegard2015} are among of the most extensively used methods---see \cite[Tables 1 and 2]{seo2017traffic} for a list of state estimators used in the recent literature. To mention a few, traffic density estimation has been studied based on a switching-mode scheme of \textit{cell transmission model} (CTM) \cite{munoz2003traffic}. The same problem is revisited in \cite{alvarez2004adaptive} by developing adaptive nonlinear observers on a continuous-time traffic model, as opposed to the discrete-time CTM.
{\color{black} The traffic density estimation using Kalman Filter based on the measurements of average traffic speed and flow for mixed traffic is conducted in \cite{Liberis2016}, in which METANET traffic flow model \cite{papageorgiou1989macroscopic,papageorgiou2010traffic} is used for validation purpose.} 
{\color{black} In \cite{BEKIARISLIBERIS2017convec}, traffic state estimation using EKF based on data-driven model obtained from connected vehicles are proposed, in which the location of traffic sensors are determined using the notion of structural observability.}
The authors in \cite {tampere2007extended} implement the EKF to perform traffic density estimation based on linearized CTM. To mitigate the impact of time-delays, CTM-based decentralized observer for traffic density estimation is developed in \cite{Guo2017}. 

Recently, in addition to the traffic density estimation problem, the sensor placement problem of highway segments is studied in  \cite{Contreras2016} where the \textit{linearized} Greenshield's model is considered.
Albeit it offers simplicity in contrast with nonlinear dynamic models,
the linearized models are only representative of the dynamics when the traffic density lies in the vicinity of that point. Moreover, the study is conducted to the extent of observability of the linearized dynamic models. 
This paper aims to investigate the robust traffic density estimation problem using control-theoretic approach by considering the nonlinear nature of traffic dynamics based on Greenshield's model, while incorporating worst-case disturbance scenarios thereby yielding a robust state estimation routine.
}

The estimation of traffic density can be performed by implementing a suitable dynamic state estimation methods such as robust Kalman filters or observers. As the paper's contribution is focused on observer designs for nonlinear systems, we succinctly discuss relevant research studies pertaining to this approach.
There are indeed numerous observer design methods available in the literature. The analysis on the stability of observers for Lipschitz nonlinear systems\footnote{The nonlinear system $\dot{\m x}=\m f(\m x, \m u)$ is globally Lipschitz if there exists a constant $\gamma \geq 0$ such that
	$\norm{\m f(\m x, \m u)-\m f(\hat{\m x}, \m u)}_2 \leq \gamma \norm{\m x -  \hat{\m x} }_2$
	for all $\m x, \hat{\m x} \in \mathbb{R}^{m}$. The constant $\gamma$ essentially characterizes this nonlinearity.} is performed in \cite{Rajamani1998}. The authors in \cite{Rajamani2010} derive linear matrix inequality (LMI) conditions to synthesize observers for Lipschitz nonlinear systems; a similar result is also proposed in \cite{alessandri2004design}. Albeit these lead to relatively simple procedures, they are not designed for systems with unknown inputs, disturbances, and measurement noise which are always present in practical situations. To that end, a robust $\mathcal{H}_{\infty}$ observer for Lipschitz nonlinear systems is proposed in \cite{abbaszadeh2006robust}. Recently, authors in \cite{chakrabarty2017state} use the concept of $\mathcal{L}_{\infty}$  stability, reported in~\cite{pancake2002analysis}, to design an observer for systems with \textit{incremental quadratic nonlinearity} which is more generalized form of Lipschitz nonlinearity. An earlier version of this work appeared in~\cite{nugroho2019traffic} where we \textit{(i)} consider traffic density modeling for mostly the uncongested mode and \textit{(ii)} utilize a dynamic observer that is not designed to deal with uncertainty in process and measurement models.

In light of the aforementioned literature, the paper's contributions and organization are summarized next. 
\begin{itemize}
	\item From a transportation network modeling perspective, we formulate the traffic dynamic model of stretched highway consisting of arbitrary number and location of input and output ramp flows based on Greenshield's fundamental diagram for \textit{congested} and \textit{uncongested} modes. The modeling presumes the knowledge of congested and uncongested \textit{modes} (or \textit{cases}) on highways, which can be done through data analytics or fault detection techniques. 
	Given this, the formulated dynamic model is then represented in state-space form making it amenable to a plethora of control-theoretic approaches. This contribution is presented in Section \ref{sec:model}.
	\item From a nonlinear traffic model perspective, we prove that the nonlinearity in the Greeshield dynamic traffic model follows the \textit{locally Lipschitz continuous} function set. We also provide methods to compute the corresponding Lipschitz constant for an arbitrary highway configuration for congested and uncongested modes. This contribution is provided in Section \ref{sec:nonlinearity}. These two aforementioned contributions pave the way for two applications in transportation systems: \textit{(A)} Performing robust state estimation of traffic density by utilizing nonlinear observers for Lipschitz systems, considering uncertainty in process and measurement models. This is akin to designing $\mathcal{H}_{\infty}$ controllers to perform state feedback control.
	\textit{(B)} Building localized observer-based control strategies for ramp metering and state-feedback control. 
	\item From a control and estimation-theoretic perspective, 
we consider the aforementioned $\mathcal{L}_{\infty}$ stability concept to design an observer for traffic density estimation for systems under disturbances, unknown inputs, and sensor faults. Instead of using the {incremental quadratic nonlinearity} classification of the nonlinearity from \cite{chakrabarty2017state}, we propose a different condition which is simpler as it is designed specifically for Lipschitz nonlinear systems---a property which we prove in this paper for the traffic dynamics. Section \ref{sec:robust_obs} presents this contribution. The design of this robust observer is performed using scalable {semidefinite programming} (SDP) methods and shown to perform well---and sometimes outperforming classical Kalman-filter based estimation techniques---even under significant disturbances and parametric uncertainty; this is discussed in the numerical tests in Section \ref{sec:numerical}. 
\end{itemize}

{\color{black} It must be noted that the presented, continuous-time robust traffic estimation framework developed in this paper only considers Greenshield's fundamental diagram in continuous-time. Nonetheless, the proposed method can be extended for traffic density estimation based on CTM using other fundamental diagrams in discrete-time.} The next section presents the notation used in this paper.

\section{Notations and Preliminaries}
Italicized, boldface upper and lower case characters represent matrices and column vectors: $a$ is a scalar, $\m a$ is a vector, and $\m A$ is a matrix. Matrix $\m I$ denotes the identity square matrix, whereas $\mO$ denotes a zero matrix of appropriate dimensions. The notations $\mathbb{R}$, $\mathbb{R}_+$, and $\mathbb{R}_{++}$ denote the set of real numbers, non-negative, and positive real numbers. The notations $\mathbb{R}^n$ and $\mathbb{R}^{p\times q}$ denote row vectors with $n$ elements and matrices with size $p$-by-$q$ with elements in $\mathbb{R}$, whereas $\mathbb{S}^{m}_{+}$ and $\mathbb{S}^{m}_{++}$ denote the set of positive semi-definite and positive definite matrices. For any vector $\m x \in \mathbb{R}^{n}$, $\Vert\m x\Vert_2$ denotes the Euclidean norm of of $\m x$, defined as $\Vert \m x\Vert_2 = \sqrt{\m x^{\top}\m x} $ , where $\m x^{\top}$ is the transpose of $\m x$. 
For set $\mathcal{X}$, the notation $\abs{\mathcal{X}}$ denotes the cardinality of $\mathcal{X}$. For simplicity, the notation `$*$' denotes terms induced by symmetry in symmetric block matrices. Tab.~\ref{tab:notation} provides nomenclature utilized in the ensuing sections. 
{\color{black} In what follows, we present the formal definitions of Lipschitz continuity and the $\mathcal{L}_{\infty}$ norm.

\begin{mydef}[Lipschitz Continuity]\label{def:Lipschitz_cont}
	Let $\m f : \mathbb{R}^{m} \rightarrow \mathbb{R}^{n}$. Then, $\m f$ is Lipschitz continuous in $\mathbfcal{B} \subseteq \mathbb{R}^{m}$ if there exists a constant $\gamma \in \mathbb{R}_{+}$ such that
	\begin{align}
	\norm{\m f(\m x)-\m f(\hat{\m x})}_2 \leq \gamma \norm{\m x -  \hat{\m x} }_2, \label{lipschitz}
	\end{align}
	for all $\m x, \hat{\m x} \in \mathbfcal{B}$.
\end{mydef}

\begin{mydef}\label{L_inf_space_norm}
	The $\mathcal{L}_{\infty}$ space is defined as
	\begin{align*}
	\mathcal{L}_{\infty} \triangleq \{\m v:[0,\infty)\rightarrow \mathbb{R}^n\,|\,\norm{\m v(t)}_{\mathcal{L}_{\infty}}< \infty\},
	\end{align*}
	in which the $\mathcal{L}_{\infty}$ norm, denoted by $\norm{\cdot}_{\mathcal{L}_{\infty}}$, is defined as
	\begin{align*}
	\norm{\m v(t)}_{\mathcal{L}_{\infty}} \triangleq \sup_{t \geq 0} \,\norm{\m v(t)}_2,
	\end{align*}
	for a continuous function $\m v:[0,\infty)\rightarrow \mathbb{R}^n$.   
\end{mydef}
}

\begin{table}[t!]
	\footnotesize	\renewcommand{\arraystretch}{1.5}
	\caption{Paper nomenclature: parameter, variable, and set definitions.}
	\label{tab:notation}
	\centering
	\begin{tabular}{|l|l|}
		\hline
		\textbf{Notation} & \textbf{Description}\\
		\hline
		\hline
		\hspace{-0.1cm}$\mathbfcal{E}$ & \hspace{-0.1cm}the set of highway segments on the stretched highway \\
		\hspace{-0.1cm} & \hspace{-0.1cm}$\mathbfcal{E} = \{ 1,2,\hdots,N \}$ , $N \triangleq \abs{\mathbfcal{E}}$ \\
		\hline
		\hspace{-0.1cm}$\mathbfcal{E}_I$ & \hspace{-0.1cm}the set of highway segments with on-ramps \\
	\hspace{-0.1cm}	&  \hspace{-0.1cm}$\mathbfcal{E}_I = \{ 1,2,\hdots,N_I \}$ , $N_I \triangleq \abs{\mathbfcal{E}_I}$ \\
		\hline
		\hspace{-0.1cm}$\mathbfcal{E}_O$ & \hspace{-0.1cm}the set of highway segments with off-ramps \\
		\hspace{-0.1cm}& \hspace{-0.1cm}$\mathbfcal{E}_O = \{ 1,2,\hdots,N_O \}$, $N_O \triangleq \abs{\mathbfcal{E}_O}$ \\ 
		\hline
		\hspace{-0.1cm}$\hat{\mathbfcal{E}}$ & \hspace{-0.1cm}the set of on-ramps, $\hat{\mathbfcal{E}} = \{ 1,2,\hdots,N_I \}$ , $N_I = |\hat{\mathbfcal{E}}|$\\
		\hline
		\hspace{-0.1cm}$\check{\mathbfcal{E}}$ & \hspace{-0.1cm}the set of off-ramps, $\check{\mathbfcal{E}} = \{ 1,2,\hdots,N_O \}$ , $N_O = |\check{\mathbfcal{E}}| $\\
		\hline
		\hspace{-0.1cm}$\rho_i (t) \triangleq \rho_i$ & \hspace{-0.1cm}traffic density in segment $i \in \mathbfcal{E}$ (vehicles/m) \\
		\hline
		\hspace{-0.1cm}$q_i(t)\triangleq q_i$ & \hspace{-0.1cm}traffic flow in segment $i \in \mathbfcal{E}$ (vehicles/s)  \\
		\hline
		\hspace{-0.1cm}$v_i(t) \triangleq  v_i$ & \hspace{-0.1cm}traffic speed in segment $i \in \mathbfcal{E}$ (m/s)  \\
		\hline
		\hspace{-0.1cm}$\hat{\rho}_i(t) \triangleq \hat{\rho}_i$ & \hspace{-0.1cm}traffic density on on-ramp $i\in \hat{\mathbfcal{E}}$ \\
		\hline
		\hspace{-0.1cm}$\check{\rho}_i (t)\triangleq \check{\rho}_i$ & \hspace{-0.1cm}traffic density on off-ramp $i\in \check{\mathbfcal{E}}$ \\
		\hline
		\hspace{-0.1cm}$\hat{q}_i(t) \triangleq \hat{q}_i$ & \hspace{-0.1cm}traffic flow on the other end of on-ramp $i\in \hat{\mathbfcal{E}}$ \\
		\hline
		\hspace{-0.1cm}$\check{q}_i(t) \triangleq \check{q}_i $ & \hspace{-0.1cm}traffic flow on the other end of off-ramp $i\in \check{\mathbfcal{E}}$ \\
		\hline
		\hspace{-0.1cm}$f_{\mathrm{in}}(t)\triangleq f_{\mathrm{in}}$ & \hspace{-0.1cm}upstream flow entering the stretched highway \\
		\hline
		\hspace{-0.1cm}$f_{\mathrm{out}}(t)\triangleq f_{\mathrm{out}}$ & \hspace{-0.1cm}downstream flow exiting the stretched highway \\
		\hline
		\hspace{-0.1cm}$\hat{f}_i (t) \triangleq  \hat{f}_i$ & \hspace{-0.1cm}upstream flow entering the on-ramp $i\in \hat{\mathbfcal{E}}$ \\
		\hline
		\hspace{-0.1cm}$\check{f}_i(t)\triangleq  \check{f}_i$ & \hspace{-0.1cm}downstream flow exiting the off-ramp $i\in \check{\mathbfcal{E}}$ \\
		\hline
		\hspace{-0.1cm}$v_f$ & \hspace{-0.1cm}free-flow speed (m/s)  \\
		\hline
		\hspace{-0.1cm}$\rho_m$ & \hspace{-0.1cm}maximum density (vehicles/m)  \\
		\hline
		\hspace{-0.1cm}$\rho_c$ & \hspace{-0.1cm}critical density (vehicles/m)  \\
		\hline
		\hspace{-0.1cm}$q_m$ & \hspace{-0.1cm}maximum flow (vehicles/s)   \\
		\hline
		\hspace{-0.1cm}$\alpha(i)$ & \hspace{-0.1cm}exit ratio for off-ramp $i\in \check{\mathbfcal{E}}$, where $\alpha(i)\in [0,1]$ \\ 
		\hline 
		\hspace{-0.1cm}$\delta$ & \hspace{-0.1cm}constant equal to $\frac{v_f}{l \rho_m}$ \\
		\hline
	\end{tabular}
\end{table}

\section{Dynamic Modeling of Highway Traffic with Ramp Flows}\label{sec:model}
In this study, we consider a macroscopic traffic model referred to as the \textit{Lighthill-Whitman-Richards} (LWR) Model \cite{Lighthill1995a,Lighthill1995b,Richards1956}. This model is a nonlinear first-order hyperbolic PDE based on the vehicle conservation principle. This principle describes the evolution of traffic density on a highway segment, given the knowledge of initial conditions and boundary conditions, and is expressed by the following PDE \cite{Kachroo2008}
\begin{align}
\frac{\partial}{\partial t}\rho (x,t) + \frac{\partial}{\partial x}q(x,t) = 0, \label{eq:LWR_model}
\end{align}
where $\rho$ and $q$ are functions of position $x$ and time $t$. The relation between $\rho$ and $q$ is given as \cite[Section 2.4.1]{Kachroo2008}
\begin{align}
q(x,t) = v(x,t)  \rho(x,t), \label{eq:density_vs_flow}
\end{align}  
where $v$ is the average traffic speed. In practice, the traffic speed on a highway segment depends on its current traffic density. One of the widely used model that describes this relation is the \textit{Greenshield fundamental diagram}. This model assumes a linear relationship between traffic speed and traffic density \cite[Section 2.4.3]{Kachroo2008}, that is, for known free-flow speed $v_f$ and maximum density $\rho_m$, traffic speed is calculated as
\begin{align}
v(x,t) = v_f \left( 1-\frac{\rho(x,t)}{\rho_m}\right), \label{eq:greenshield_model}
\end{align}
as depicted in Fig. \ref{fig1}. Critical density $\rho_c$ is regarded as the density for which the flow is maximal, which for a Greenshield model corresponds to one half the maximum density, i.e., $\rho_c = \tfrac{1}{2}\rho_m$. In This model, the relation between traffic flow and traffic density is illustrated in Fig. \ref{fig1}, where the maximum flow $q_m$ is achieved when the traffic density is equal to critical density.
\begin{figure}[t]	
\centering	\includegraphics[scale=0.240]{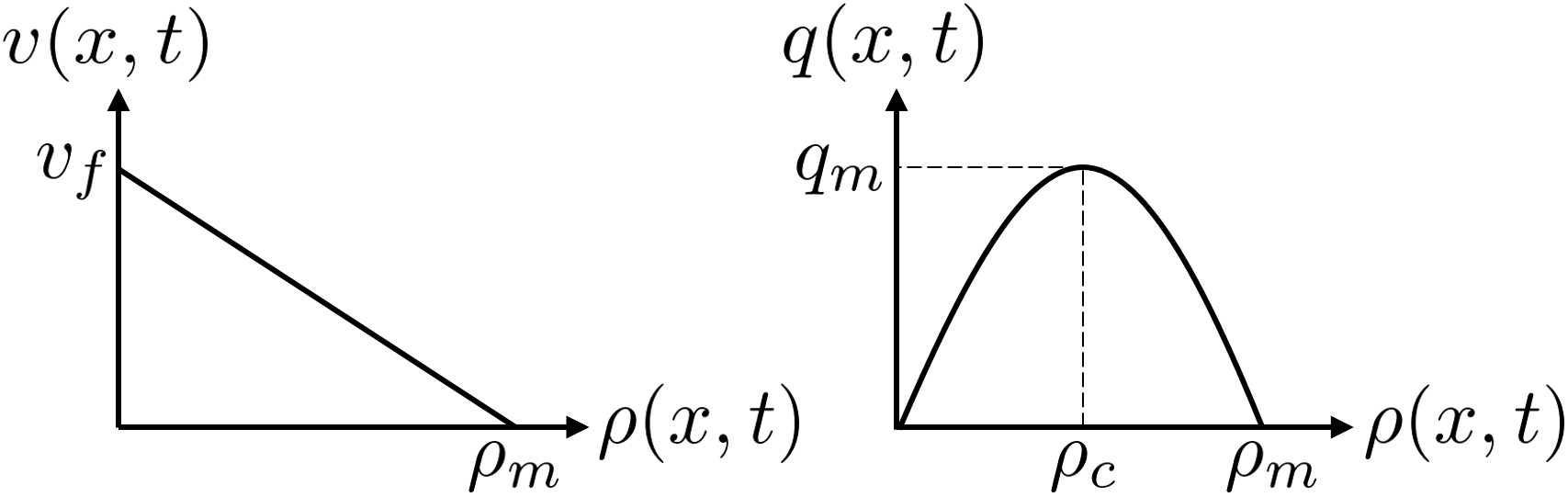}
		\vspace{-0.2cm}
	\noindent \caption{Greenshield's fundamental diagram: (left) Traffic speed versus traffic density. (right) Traffic flow versus traffic density.}
	\label{fig1}
\end{figure}

In order to enable state estimation for traffic density at highway segments without traffic sensor installation, the first-order PDE \eqref{eq:LWR_model} is then discretized in space using Godunov's scheme \cite{Lebacque1995}. The resulting model can be approximated with the following ODE 
\begin{align}
\dot{\rho} (x,t)  
  &\approx  \frac{q (x,t)-q (x+l,t)}{l}, \label{eq:discrete_LWR_model}
\end{align}
where $l$ is the length of the highway segment. The evolution of the traffic density on each highway segment is influenced by the upstream and downstream densities from its neighboring highway segments. There are two regions where the steady-state flow (or \textit{equilibirum point}) of the stretched highway can lie: congested and uncongested \cite{Contreras2016}.  {\color{black} A highway segment is said to be congested (jammed) if the density $\rho(t)$ satisfies $\rho_c < \rho(t) \leq \rho_m$, and otherwise uncongested (free-flow) if $\rho(t)$ satisfies $0 \leq \rho(t) \leq \rho_c$. This implies that, for a stretched highway divided into $N$ segments, there are $2^{N}$ possible modes, in which each segment can be either congested or uncongested. 

Following \cite{Contreras2016}, here we consider two cases that likely prevail on a stretched highway, in which segments inside a stretched highway \textit{section} (which is made of a number of segments) are all either uncongested or congested.} 
Other studies, such as~\cite{munoz2003traffic}, have considered more than two modes and then perform traffic density estimation. The mode identification is based on density measurements at cell boundaries. Other works have considered the mode identification~\cite{LEMARCHAND2012648} which we consider as given in the paper. That is, the proposed methods in this paper are not considered with the problem of modes detection and identification; we are rather concerned with the classification of the nonlinear dynamics  and scalable, robust state estimation methods.
This approach is simple since we do not consider all possible modes, thus avoiding a more complex switching-model. Furthermore, and by using the nonlinear model, one does not need to perform linearization which can be more practical in the situation when the equilibrium point is unknown.

\begin{figure}[t]	\hspace{0.0cm}\centerline{\includegraphics[keepaspectratio=true,scale=0.274]{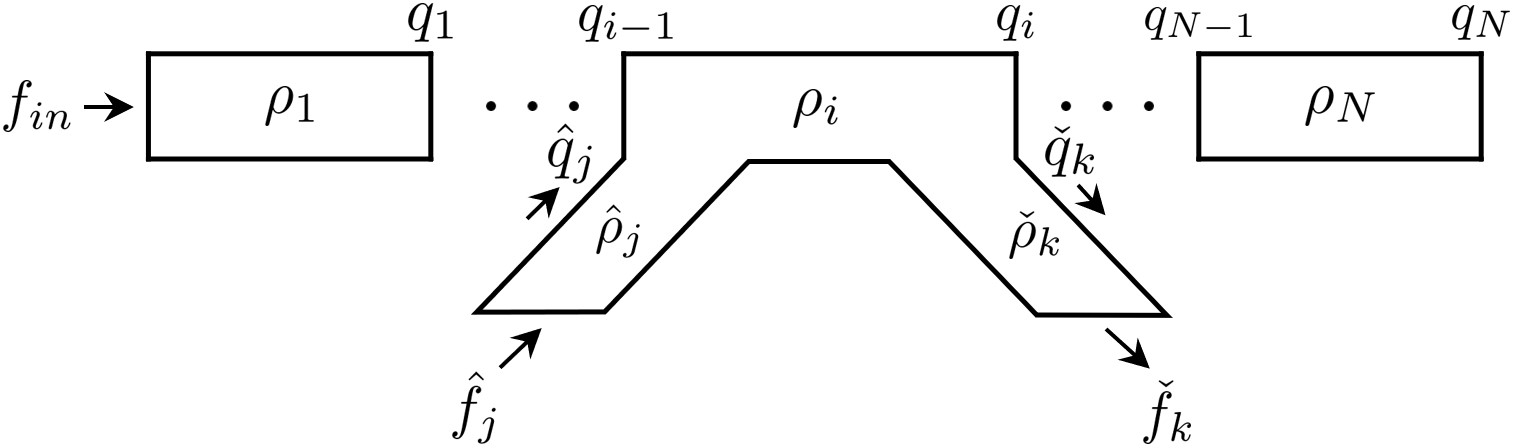}}
	\noindent \caption{The diagram of the highway for the uncongested case.}
	\label{fig2}
\end{figure}

\subsection{The Uncongested Case}

For each highway segment, \eqref{eq:discrete_LWR_model} can be generalized as follow
\begin{align}
\dot{\rho} (x,t)  
  &=  \frac{\sum q (x_1,t)-\sum q (x_2,t)}{l}, \label{eq:discrete_LWR_model_generalized}
\end{align} 
where $x_1$ and $x_2$ represent the location of the boundaries such that $x_2-x_1 = l$. In  the above equation, $\sum q (x_1,t)$ and $\sum q (x_2,t)$ denote the total inflow and total outflow associated with that highway segment. To build the dynamic model, the stretched highway is divided into $N$ segments of equal length $l$ such that the rate of change of the traffic density on each segment can be modeled by \eqref{eq:discrete_LWR_model_generalized}. {\color{black}Fig. \ref{fig2} gives an illustration on how the model is built for the uncongested case, in which we assume that $0 \leq \rho_i \leq \rho_c$ for $i \in {\mathbfcal{E}}$ and $0 \leq \rho_i \leq \rho_m$ for $i \notin {\mathbfcal{E}}$. 
For the sake of simplicity, all on- and off-ramps are assumed to have the same properties as those on the stretched highway segments, such as $l$, $v_f$, $\rho_c$ and $\rho_m$.

In addition to the above, it is also assumed that any highway segment can have at most one on-ramp and/or one off-ramp, in which the first and last highway segments are not connected to any on-ramp nor off-ramp.} If a highway segment $i$ is connected to both on- and off-ramp, then $i \in {\mathbfcal{E}}_I \cap {\mathbfcal{E}}_O$ with $N_{IO} = \vert{\mathbfcal{E}}_I \cap {\mathbfcal{E}}_O\vert$ denotes the number of highway segments connected to both on- and off-ramps. Moreover, we require that the upstream flow on the first highway segment $f_{\mathrm{in}}$, upstream flow on each on-ramp $\hat{f}_{i}$ for $i\in \hat{\mathbfcal{E}}$, and downstream flow on each off-ramp $\check{f}_{i}$ for $i\in \check{\mathbfcal{E}}$ are all  known, which in a real situation, can be obtained from conventional traffic detectors \cite{Liberis2016}. {\color{black} The exit ratio for all off-ramps are also assumed to be known and fixed.}
Based on these assumptions, by conveniently defining constant $\delta \triangleq \frac{v_f}{l \rho_m}$, and combining equations \eqref{eq:density_vs_flow} and \eqref{eq:greenshield_model} with \eqref{eq:discrete_LWR_model_generalized}, the equations describing the evolution of traffic densities can be classified into several categories, each of which are specified as follows
\begin{subequations}\label{eq:uncongested_dynamic_1}
\begin{enumerate}[label=$\alph*$)]
\item $i\in\mathbfcal{E}\setminus {\mathbfcal{E}}_I \cup {\mathbfcal{E}}_O$, $i = 1$
\begin{align}
\hspace{-0.4cm}\dot{\rho}_i &= \frac{f_{\mathrm{in}}-q_i}{l} = \frac{f_{\mathrm{in}}}{l}-\frac{v_f}{l}\rho_i+\delta \rho_i^2 \label{eq:uncongested_dynamic_1a}
\end{align}
\item $i\in\mathbfcal{E}\setminus {\mathbfcal{E}}_I \cup {\mathbfcal{E}}_O$, $i \neq 1$
\begin{align}
\hspace{-0.4cm}\dot{\rho}_i &= \frac{q_{i-1}-q_i}{l} 
= \frac{v_f}{l}\left(\rho_{i-1}-\rho_i\right)-\delta \left(\rho_{i-1}^2-\rho_i^2 \right)\label{eq:uncongested_dynamic_1b}
\end{align}
\item $i\in\mathbfcal{E}_I\setminus{\mathbfcal{E}}_I \cap {\mathbfcal{E}}_O$, $j\in\hat{\mathbfcal{E}}$
\begin{align}
\hspace{-0.6cm}\dot{\rho}_i &= \frac{q_{i-1}+\hat{q}_j-q_{i}}{l} = \frac{v_f}{l}(\rho_{i-1}-\rho_i+\hat{\rho}_j) \nonumber \\ & \quad -\delta \left(\rho_{i-1}^2-\rho_i^2+\hat{\rho}_j^2\right)\label{eq:uncongested_dynamic_1c}
\end{align}
\item $i\in\mathbfcal{E}_O\setminus{\mathbfcal{E}}_I \cap {\mathbfcal{E}}_O$, $j\in\check{\mathbfcal{E}}$
\begin{align}
\hspace{-0.6cm}\dot{\rho}_i &= \frac{q_{i-1}-\alpha(j)\check{q}_j-q_{i}}{l} = \frac{v_f}{l}(\rho_{i-1}-\rho_i -\alpha(j)\check{\rho}_j) \nonumber \\ & \quad-\delta \left(\rho_{i-1}^2-\rho_i^2-\alpha(j)\check{\rho}_j^2\right)\label{eq:uncongested_dynamic_1d}
\end{align}
\item $i\in{\mathbfcal{E}}_I \cap {\mathbfcal{E}}_O$, $j\in\hat{\mathbfcal{E}}$, $k\in\check{\mathbfcal{E}}$
\begin{align}
\hspace{-0.6cm}\dot{\rho}_i &= \frac{q_{i-1}+\hat{q}_j-\alpha(k)\check{q}_k-q_{i}}{l}\nonumber \\ 
&= \frac{v_f}{l}(\rho_{i-1}-\rho_i+\hat{\rho}_j-\alpha(k)\check{\rho}_k) \nonumber \\ & \quad -\delta \left(\rho_{i-1}^2-\rho_i^2+\hat{\rho}_j^2-\alpha(k)\check{\rho}_k^2\right) \label{eq:uncongested_dynamic_1e}
\end{align}
\item $i\in\hat{\mathbfcal{E}}$
\begin{align}
\hspace{-0.4cm}\dot{\hat{\rho}}_i &= \frac{\hat{f}_{i}-\hat{q}_i}{l} = \frac{\hat{f}_{i}}{l}-\frac{v_f}{l}\hat{\rho}_i+\delta \hat{\rho}_i^2 \label{eq:uncongested_dynamic_1f}
\end{align}
\item $i\in\check{\mathbfcal{E}}$
\begin{align}
\hspace{-0.4cm}\dot{\hat{\rho}}_i &= \frac{\alpha(i)\check{q}_i-\check{f}_i}{l} = -\frac{\check{f}_{i}}{l}+\alpha(i)\frac{v_f}{l}\check{\rho}_i-\alpha(i)\delta \check{\rho}_i^2. \label{eq:uncongested_dynamic_1g}
\end{align}
\end{enumerate}
\end{subequations}
We construct the state vector $\m x \triangleq \bmat{\rho_i & \cdots & \hat{\rho}_j & \cdots & \check{\rho}_k}^{\top}$ in an increasing order for all $i\in{\mathbfcal{E}}$, $j\in\hat{\mathbfcal{E}}$, and $k\in\check{\mathbfcal{E}}$ such that $\m x$ {\color{black} is of dimension $n$} where $n = N+N_I+N_O$. By doing so, Eq.~\eqref{eq:uncongested_dynamic_1} can be written in nonlinear state-space format 
\begin{mdframed}[style=MyFrame]
\begin{align}
\dot{\m x} (t)&= \bmat{\m A_1 & \m A_2 \\ \m O & \m A_3} \m x (t)+ 	\m f(\m x) + \m {B_{\mathrm{u}}} \m u(t), \label{eq:state_space_uncongested_ramps}
\end{align}
\end{mdframed}
with $\m A_1$, $\m A_2$, $\m A_3$, $\m f(\cdot)$, $\m {B_{\mathrm{u}}}$, and $\m u$ are specified in Tab.~\ref{tab:dynamics_parameter_uncongested} of Appendix~\ref{apdx:table_congested}. {\color{black}
	\begin{myrem}\label{rem:set_X_U}
		The uncongested case (or region) considers that $x_i\in [0,\rho_c]$ for all $i\in\mathbfcal{E}$ and $x_i\in [0,\rho_m]$ otherwise. For convenience, we define the set $\mathbfcal{X}_{\m{\mathrm{u}}}\triangleq [0,\rho_c]^N\times \hat{\mathbfcal{X}}\times\check{\mathbfcal{X}}$ where  $\hat{\mathbfcal{X}} \triangleq [0,\rho_m]^{N_I}$ and $\check{\mathbfcal{X}} \triangleq [0,\rho_m]^{N_O}$ such that $\m x\in \mathbfcal{X}_{\m{\mathrm{u}}}$. Therefore, any traffic condition in the uncongested case represented by \eqref{eq:state_space_uncongested_ramps} and Tab.~\ref{tab:dynamics_parameter_uncongested} is assumed to have at least one equilibrium point inside $\mathbfcal{X}_{\m{\mathrm{u}}}$.
	\end{myrem}}

\subsection{The Congested Case}
{\color{black} In this section, the modeling for the traffic model considering congested zones is presented, in which all highway segments are assumed to be congested, i.e., $\rho_c < \rho_i \leq \rho_m$ for $i \in {\mathbfcal{E}}$ and $0 \leq \rho_i \leq \rho_m$ for $i \notin {\mathbfcal{E}}$.} We consider the same assumptions as in the uncongested case except that the downstream flow on the last highway segment $f_{\mathrm{out}}$ is known (instead of $f_{\mathrm{in}}$). To that end, the evolution of traffic densities in the congested case are formulated as follows
\begin{subequations}\label{eq:uncongested_dynamic_2}
\begin{enumerate}[label=$\alph*$)]
\item $i\in\mathbfcal{E}\setminus {\mathbfcal{E}}_I \cup {\mathbfcal{E}}_O$, $i = N$
\begin{align}
\hspace{-0.4cm}\dot{\rho}_i &= \frac{q_i-f_{\mathrm{out}}}{l} = \frac{v_f}{l}\rho_i-\delta \rho_i^2-\frac{f_{\mathrm{out}}}{l} \label{eq:uncongested_dynamic_2a}
\end{align}
\item $i\in\mathbfcal{E}\setminus {\mathbfcal{E}}_I \cup {\mathbfcal{E}}_O$, $i \neq N$
\begin{align}
\hspace{-0.4cm}\dot{\rho}_i &= \frac{q_{i}-q_{i+1}}{l} 
= \frac{v_f}{l}\left(\rho_{i}-\rho_{i+1}\right)-\delta \left(\rho_{i}^2-\rho_{i+1}^2 \right)\label{eq:uncongested_dynamic_2b}
\end{align}
\item $i\in\mathbfcal{E}_I\setminus{\mathbfcal{E}}_I \cap {\mathbfcal{E}}_O$, $j\in\hat{\mathbfcal{E}}$
\begin{align}
\hspace{-0.6cm}\dot{\rho}_i &= \frac{q_{i}+\hat{q}_j-q_{i+1}}{l} = \frac{v_f}{l}(\rho_{i}-\rho_{i+1} +\hat{\rho}_j)\nonumber \\ & \quad -\delta \left(\rho_{i}^2-\rho_{i+1}^2+\hat{\rho}_j^2\right)\label{eq:uncongested_dynamic_2c}
\end{align}
\item $i\in\mathbfcal{E}_O\setminus{\mathbfcal{E}}_I \cap {\mathbfcal{E}}_O$, $j\in\check{\mathbfcal{E}}$
\begin{align}
\hspace{-0.6cm}\dot{\rho}_i &= \frac{q_{i}-\alpha(j)\check{q}_j-q_{i+1}}{l} = \frac{v_f}{l}(\rho_{i}-\rho_{i+1}-\alpha(j)\check{\rho}_j)\nonumber \\ & \quad -\delta \left(\rho_{i}^2-\rho_{i+1}^2-\alpha(j)\check{\rho}_j^2\right)\label{eq:uncongested_dynamic_2d}
\end{align}
\item $i\in{\mathbfcal{E}}_I \cap {\mathbfcal{E}}_O$, $j\in\hat{\mathbfcal{E}}$, $k\in\check{\mathbfcal{E}}$
\begin{align}
\hspace{-0.6cm}\dot{\rho}_i &= \frac{q_{i}+\hat{q}_j-\alpha(k)\check{q}_k-q_{i+1}}{l}\nonumber \\ 
&= \frac{v_f}{l}(\rho_{i}-\rho_{i+1}+\hat{\rho}_j-\alpha(k)\check{\rho}_k) \nonumber \\ & \quad -\delta \left(\rho_{i}^2-\rho_{i+1}^2+\hat{\rho}_j^2-\alpha(k)\check{\rho}_k^2\right) \label{eq:uncongested_dynamic_2e}
\end{align}
\item $i\in\hat{\mathbfcal{E}}$
\begin{align}
\hspace{-0.4cm}\dot{\hat{\rho}}_i &= \frac{\hat{f}_{i}-\hat{q}_i}{l} = \frac{\hat{f}_{i}}{l}-\frac{v_f}{l}\hat{\rho}_i+\delta \hat{\rho}_i^2 \label{eq:uncongested_dynamic_2f}
\end{align}
\item $i\in\check{\mathbfcal{E}}$
\begin{align}
\hspace{-0.4cm}\dot{\hat{\rho}}_i &= \frac{\alpha(i)\check{q}_i-\check{f}_i}{l} = -\frac{\check{f}_{i}}{l}+\alpha(i)\frac{v_f}{l}\check{\rho}_i -\alpha(i)\delta \check{\rho}_i^2. \label{eq:uncongested_dynamic_2g}
\end{align}
\end{enumerate}
\end{subequations}
Likewise, from \eqref{eq:uncongested_dynamic_2}, the traffic dynamics for the congested case can be written in a state-space form of \eqref{eq:state_space_uncongested_ramps}, where $\m A_1$, $\m f(\cdot)$, $\m {B_{\mathrm{u}}}$, and $\m u$ are detailed in Tab.~\ref{tab:dynamics_parameter_congested} and  matrices $\m A_2$ and $\m A_3$ are given in Tab.~\ref{tab:dynamics_parameter_uncongested} of Appendix~\ref{apdx:table_congested}. Given the dynamic models for the congested and uncontested cases, the next section presents a characterization of the function set of the nonlinearity $\m f(\cdot)$.

{\color{black}	\begin{myrem}\label{rem:set_X_C}
Since in the congested case it is assumed that $x_i\in (\rho_c,\rho_m]$ for all $i\in\mathbfcal{E}$ and $x_i\in [0,\rho_m]$ otherwise, we define $\mathbfcal{X}_{\m{\mathrm{c}}}\triangleq [\rho_c,\rho_m]^N\times \hat{\mathbfcal{X}}\times\check{\mathbfcal{X}}$ where  $\hat{\mathbfcal{X}}$ and $\check{\mathbfcal{X}}$ are defined in Remark \ref{rem:set_X_U} such that $\m x\in \mathbfcal{X}_{\m{\mathrm{c}}}$. Therefore, any traffic condition in the congested case represented by \eqref{eq:state_space_uncongested_ramps} and Tables  \ref{tab:dynamics_parameter_uncongested} and \ref{tab:dynamics_parameter_congested} is assumed to have at least one equilibrium point inside $\mathbfcal{X}_{\m{\mathrm{c}}}$.
\end{myrem}}
\section{Characterization of the Nonlinear Functions}\label{sec:nonlinearity}
This section discusses the investigation of the function set class for $\m f(\cdot)$. As there are several function sets that a multi-variable function can belong to (e.g., locally/globally Lipschitz, one-sided Lipschitz, quadratically bounded, etc.). From a control-theoretic perspective, this is important as it allows the design of asymptotic state-feedback controllers and state observers---the latter application is discussed in Section~\ref{sec:robust_obs}.
   In the case of traffic dynamics, the fact that the traffic density is differentiable and bounded in its domain implies that the nonlinear term $\m f(\cdot)$ is differentiable Lipschitz continuous---at least locally in a bounded region of the state-space. {\color{black}The definition of Lipschitz continuity is described in \eqref{lipschitz}.} 
   
Although the smallest $\gamma$ satisfying \eqref{lipschitz} is more desirable, finding such constant can be cumbersome. For practical purpose, however, finding any $\gamma$ that satisfies \eqref{lipschitz} while still useful for designing controllers and observers is sufficient. With that in mind, we present analytical methods to determine Lipschitz constants for nonlinear function $\m f(\cdot)$ for both uncongested and congested cases.

{\color{black}
\begin{myprs}\label{proposition1}
In the uncongested case, the nonlinear function $\m f:\mathbb{R}^n\rightarrow\mathbb{R}^n$ governing the traffic dynamics \eqref{eq:state_space_uncongested_ramps} and specified in Tab.~\ref{tab:dynamics_parameter_uncongested} is locally Lipschitz in $\mathbfcal{X}_{\m{\mathrm{u}}}$ with 
\begin{mdframed}[style=MyFrame]
\vspace{-0.20cm}
\footnotesize \begin{align}
\gamma_u &= \frac{v_f}{l}\left(\vphantom{\sum_{{i\in\mathbfcal{E}\setminus {\mathbfcal{E}}_I \cup {\mathbfcal{E}}_O}}\frac{v_f^2}{l^2}}2N+2N_I-1+(6+4\sqrt{2})(N_I-N_O+N_{IO})\right.\nonumber\\
& \quad \left. +   \sum_{i\in\check{\mathbfcal{E}}\setminus \hat{\mathbfcal{E}} \cap \check{\mathbfcal{E}}}\left(4\sqrt{2}\alpha(i)+4\alpha^2(i)\right)+\sum_{i\in{\check{\mathbfcal{E}}}}4\alpha^2(i)\right.\nonumber\\
& \quad \left. + \sum_{i\in\hat{\mathbfcal{E}} \cap \check{\mathbfcal{E}}}\left((8+4\sqrt{2})\alpha(i)+4\alpha^2(i)\right)\vphantom{\sum_{{i\in\mathbfcal{E}\setminus {\mathbfcal{E}}_I \cup {\mathbfcal{E}}_O}}\frac{v_f^2}{l^2}}\right)^{\!1/2}. \label{eq:lipschitz_const_uncongested}
\end{align}
\end{mdframed}
\end{myprs}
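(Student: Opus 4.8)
The plan is to read the nonlinear term directly off the closed forms in \eqref{eq:uncongested_dynamic_1}. Since the matrices $\m A_1,\m A_2,\m A_3$ in \eqref{eq:state_space_uncongested_ramps} absorb every linear $\tfrac{v_f}{l}\rho_\bullet$ contribution, $\m f(\m x)$ is precisely the vector in $\mathbb{R}^n$ (with $n=N+N_I+N_O$) whose components collect the quadratic terms $\pm\delta\rho_\bullet^2$ and $\pm\alpha(\cdot)\delta\check\rho_\bullet^2$ appearing in categories $a)$--$g)$. Because $\mathbfcal{X}_{\m{\mathrm{u}}}=[0,\rho_c]^N\times[0,\rho_m]^{N_I}\times[0,\rho_m]^{N_O}$ is a box (hence convex and compact), any $\m x,\hat{\m x}\in\mathbfcal{X}_{\m{\mathrm{u}}}$ keep each coordinate pair in a common interval. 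The first step is thus to linearize every quadratic difference through $\rho_k^2-\hat\rho_k^2=(\rho_k+\hat\rho_k)(\rho_k-\hat\rho_k)$ and bound the sum factor by the box: for a mainline density $\rho_k+\hat\rho_k\le 2\rho_c=\rho_m$, so that $\delta\,\abs{\rho_k^2-\hat\rho_k^2}\le\tfrac{v_f}{l}\,\abs{\rho_k-\hat\rho_k}$ since $\delta\rho_m=\tfrac{v_f}{l}$, whereas for a ramp density $\rho_k+\hat\rho_k\le 2\rho_m$ yields the looser factor $2\tfrac{v_f}{l}$. This already explains why $\tfrac{v_f}{l}$ factors out of $\gamma_u$ and why ramp variables carry a weight twice that of mainline variables.

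Next I would exploit the sparse incidence structure visible in $a)$--$g)$: each squared density $\delta\rho_k^2$ enters at most two of the scalar balances---the equation of its own segment (with a $+$ sign) and that of the segment or ramp junction it feeds (with a $-$ sign, possibly scaled by an exit ratio $\alpha(\cdot)$). Consequently one may write
\[
\m f(\m x)-\m f(\hat{\m x})=\sum_{k}\delta\,(\rho_k^2-\hat\rho_k^2)\,\m v_k,
\]
where each $\m v_k$ is a \emph{fixed} vector supported on at most two coordinates with entries in $\{\pm1,\pm\alpha(\cdot)\}$. Squaring the Euclidean norm gives a diagonal part $\sum_k \delta^2(\rho_k^2-\hat\rho_k^2)^2\norm{\m v_k}_2^2$, which after the box bounds of the first step produces the integer and $\alpha^2(\cdot)$ coefficients, together with the off-diagonal inner products $\langle\m v_k,\m v_l\rangle$ for densities that co-occur in a common balance. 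Bounding those cross terms by a weighted Young inequality of the form $2uw\le\sqrt2\,u^2+\tfrac{1}{\sqrt2}w^2$---the natural choice for balancing a mainline factor $\tfrac{v_f}{l}$ against a ramp factor $2\tfrac{v_f}{l}$---is what I expect to generate the $\sqrt2$-weighted contributions, i.e. the $4\sqrt2$, the $(8+4\sqrt2)\alpha(\cdot)$, and the $4\sqrt2\,\alpha(i)$ appearing in \eqref{eq:lipschitz_const_uncongested}, while the $\alpha^2(i)$ sums come from the diagonal weight of the off-ramp couplings.

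The remaining work is pure bookkeeping, and I expect it to be the real obstacle rather than any single inequality. One must tally the contribution of every state variable across all seven categories and group the totals over $\mathbfcal{E}$, $\mathbfcal{E}_I$, $\mathbfcal{E}_O$, $\mathbfcal{E}_I\cap\mathbfcal{E}_O$, $\hat{\mathbfcal{E}}$ and $\check{\mathbfcal{E}}$: the mainline segments supply the backbone $2N$ with the $-1$ correction from the single terminal segment, the on-ramp--connected segments deliver the $(6+4\sqrt2)$ blocks counted through the combination $N_I-N_O+N_{IO}$, and the off-ramp weights accumulate into the $\alpha^2(i)$ and $\sqrt2\,\alpha(i)$ sums, with the $\hat{\mathbfcal{E}}\cap\check{\mathbfcal{E}}$ terms isolating segments that host both an on- and an off-ramp. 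The analytic ingredients---the difference-of-squares factorization, the two box bounds, and the sparse two-term incidence---are routine and are what the compactness of $\mathbfcal{X}_{\m{\mathrm{u}}}$ guarantees; the delicate part is assigning each squared term with its correct sign and $\alpha(\cdot)$ weight to the proper category without double counting, and applying the cross-term weights consistently so the per-category blocks assemble into the single closed form. Collecting all contributions and factoring out $\tfrac{v_f}{l}$ then gives \eqref{eq:lipschitz_const_uncongested}, and this accounting step is the one I would verify most carefully.
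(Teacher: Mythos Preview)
Your decomposition $\m f(\m x)-\m f(\hat{\m x})=\sum_k \delta(\rho_k^2-\hat\rho_k^2)\,\m v_k$ organized by \emph{state variable} is a legitimate route to \emph{some} Lipschitz constant, but it is not the paper's route, and it will not reproduce the specific expression~\eqref{eq:lipschitz_const_uncongested}. The paper proceeds \emph{component-wise}: for each row $i$ it bounds $|f_i(\m x)-f_i(\hat{\m x})|\le \gamma_i\norm{\m x-\hat{\m x}}_2$ directly and then sets $\gamma_u^2=\sum_i\gamma_i^2$. The difference-of-squares and box bounds are used exactly as you describe, but after that the paper simply applies the elementary Cauchy--Schwarz inequality $|a|+|b|\le\sqrt{2}\sqrt{a^2+b^2}\le\sqrt{2}\norm{\m x-\hat{\m x}}_2$ to the pair of \emph{mainline} increments $|x_i-\hat x_i|+|x_{i-1}-\hat x_{i-1}|$ appearing in categories $b)$--$e)$. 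That, and not any mainline/ramp balancing via Young, is the source of every $\sqrt{2}$ in \eqref{eq:lipschitz_const_uncongested}: for instance category $b)$ contains no ramp at all yet contributes $\gamma_i=\sqrt{2}\,v_f/l$. Squaring these per-row constants then yields $(2+\sqrt{2})^2=6+4\sqrt{2}$ for category~$c)$, $(\sqrt{2}+2\alpha)^2=2+4\sqrt{2}\,\alpha+4\alpha^2$ for~$d)$, and so on, and the bookkeeping reduces to summing $\gamma_i^2$ over the seven categories.

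The concrete gap is therefore twofold. First, your identification of the mechanism producing the $\sqrt{2}$ factors is incorrect, so the weights you propose to assign via $2uw\le\sqrt{2}\,u^2+\tfrac{1}{\sqrt{2}}w^2$ will not generate the coefficients in the statement. Second, your column-wise quadratic form carries nonzero cross terms $\langle\m v_k,\m v_{k+1}\rangle=-1$ along the entire mainline chain; bounding these by any symmetric Young inequality produces per-coordinate weights that stay bounded (of order $4(v_f/l)^2$), whereas the paper's row-wise sum $\sum_i\gamma_i^2$ grows like $2N(v_f/l)^2$. In other words, your approach, carried out correctly, would give a \emph{tighter} Lipschitz constant than $\gamma_u$, but not the closed form claimed in the proposition. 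To prove the statement as written you should switch to the row-wise argument: bound each $|f_i-\hat f_i|$ by triangle inequality, replace each $|x_k^2-\hat x_k^2|$ by $\rho_m|x_k-\hat x_k|$ (or $2\rho_m$ for ramps), collapse the two mainline increments with the $\sqrt{2}$ Cauchy--Schwarz step, and finally sum the squares $\gamma_i^2$ over the seven categories.
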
}
\vspace{-0.2cm}
\begin{proof}
	{\color{black}
	See Appendix \ref{apdx:proposition1_proof}.}  
\end{proof}

{\color{black}
	\begin{myprs}\label{proposition2}
		In the congested case, the nonlinear function $\m f:\mathbb{R}^n\rightarrow\mathbb{R}^n$ governing the traffic dynamics \eqref{eq:state_space_uncongested_ramps} and specified in Tab.~\ref{tab:dynamics_parameter_uncongested} is locally Lipschitz in $\mathbfcal{X}_{\m{\mathrm{c}}}$ with 
		\begin{mdframed}[style=MyFrame]
			\vspace{-0.20cm}
			\footnotesize \begin{align}
			\gamma_c &= \frac{2v_f}{l}\left(\vphantom{\sum_{{i\in\mathbfcal{E}\setminus {\mathbfcal{E}}_I \cup {\mathbfcal{E}}_O}}\frac{v_f^2}{l^2}}2N+3N_I-1+\sum_{i\in\check{\mathbfcal{E}}\setminus \hat{\mathbfcal{E}} \cap \check{\mathbfcal{E}}}\left(2\sqrt{2}\alpha(i)+\alpha^2(i)\right)\right.\nonumber\\
			& \quad \left. + \sum_{i\in\hat{\mathbfcal{E}} \cap \check{\mathbfcal{E}}}\left(4\alpha(i)+\alpha^2(i)\right)+\sum_{i\in{\check{\mathbfcal{E}}}}\alpha^2(i)\vphantom{\sum_{{i\in\mathbfcal{E}\setminus {\mathbfcal{E}}_I \cup {\mathbfcal{E}}_O}}\frac{v_f^2}{l^2}}\right)^{\!1/2}. \label{eq:lipschitz_const_congested}
			\end{align}
		\end{mdframed}
\end{myprs}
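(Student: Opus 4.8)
The plan is to follow the same route as the proof of Proposition~\ref{proposition1}, adapting it to the congested data. Since each component of $\m f$ is a quadratic polynomial in the entries of $\m x$, the map is $C^1$ on the convex box $\mathbfcal{X}_{\mathrm{c}}=[\rho_c,\rho_m]^N\times\hat{\mathbfcal{X}}\times\check{\mathbfcal{X}}$; hence, by the mean-value inequality, any upper bound on $\sup_{\m x\in\mathbfcal{X}_{\mathrm{c}}}\norm{\nabla\m f(\m x)}_2$ is a valid Lipschitz constant in the sense of \eqref{lipschitz}. First I would write the Jacobian in the compact form $\nabla\m f(\m x)=2\delta\,\m M\diag(\m x)$, where $\m M$ is the constant structure matrix whose rows are read directly from \eqref{eq:uncongested_dynamic_2a}--\eqref{eq:uncongested_dynamic_2g} and whose nonzero entries lie in $\{\pm1,\pm\alpha(i)\}$. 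The only structural differences from the uncongested $\m M$ are that the spatial coupling is to the downstream density $\rho_{i+1}$ rather than $\rho_{i-1}$, that the scalar boundary equation sits at $i=N$ rather than $i=1$, and a corresponding change in the signs of the quadratic terms.

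Next I would bound the induced $2$-norm of $\nabla\m f$ on $\mathbfcal{X}_{\mathrm{c}}$. The key quantitative fact distinguishing this case is the domain: on $\mathbfcal{X}_{\mathrm{c}}$ every density satisfies $x_k\le\rho_m$, so each factor obeys $2\delta x_k\le 2\delta\rho_m=\tfrac{2v_f}{l}$, whereas in the uncongested case the mainline densities are capped at $\rho_c=\tfrac12\rho_m$. This is exactly what replaces the leading $\tfrac{v_f}{l}$ of \eqref{eq:lipschitz_const_uncongested} by $\tfrac{2v_f}{l}$. Factoring out $\tfrac{2v_f}{l}$, I would bound $\norm{\m M\diag(\m x)}_2$ by accumulating the contributions of $\m M$ over its sparsity pattern, grouping entries that share a column---each off-ramp variable $\check\rho_j$ appears both in its host-segment equation and in its own equation, and each mainline variable appears in two adjacent equations---and applying Cauchy--Schwarz to the two-term groups; the latter is where the $\sqrt2$ factors originate.

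The remaining work is bookkeeping. I would partition the rows into the seven equation types (a)--(g), count them through $N$, $N_I$, $N_O$, $N_{IO}$ and inclusion--exclusion (e.g.\ $N-N_I-N_O+N_{IO}-1$ rows of type~(b), $N_I-N_{IO}$ of type~(c), and so on), and sum each type's diagonal, downstream-coupling, and $\alpha$-weighted ramp contributions, separating the off-ramp-only ($\check{\mathbfcal{E}}\setminus\hat{\mathbfcal{E}}\cap\check{\mathbfcal{E}}$) and both-ramp ($\hat{\mathbfcal{E}}\cap\check{\mathbfcal{E}}$) segments. Collecting terms should yield the integer part $2N+3N_I-1$ together with the $\alpha$-dependent sums in \eqref{eq:lipschitz_const_congested}. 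I expect the main obstacle to be matching the precise constants---in particular reproducing the mixed $2\sqrt2\,\alpha(i)$ and $4\alpha(i)$ cross terms and the extra $N_I$ relative to a plain Frobenius count. These come entirely from the column-sharing couplings noted above, and the delicate point is to organize the norm estimate so that those couplings are counted consistently, and to confirm that the resulting bound, while not necessarily the tightest, is genuinely an upper bound valid throughout $\mathbfcal{X}_{\mathrm{c}}$.
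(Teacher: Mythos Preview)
Your Jacobian/mean-value route is sound and would certainly yield \emph{a} Lipschitz constant on $\mathbfcal{X}_{\mathrm{c}}$, but it is a genuinely different argument from the paper's, and as written it will not reproduce the precise constant in \eqref{eq:lipschitz_const_congested}. The paper never forms $\nabla\m f$ or bounds an operator norm. It works purely component-wise: for each row type (a)--(g) it uses $|a^2-b^2|=|a+b|\,|a-b|\le 2\rho_m|a-b|$ (this is where the leading $\tfrac{2v_f}{l}$ enters, exactly as you say), obtains a scalar bound $|f_i(\m x)-f_i(\hat{\m x})|\le\gamma_i\norm{\m x-\hat{\m x}}_2$ by applying Cauchy--Schwarz \emph{within that row}, and then sets $\gamma_c=\sqrt{\sum_i\gamma_i^2}$. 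The $\sqrt{2}$ factors come from grouping the two mainline entries $x_i,x_{i+1}$ that sit in the \emph{same} component $f_i$, not from the column-sharing you describe.

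This matters because the stated formula is recipe-specific. The integer part is $2N+3N_I-1$, whereas a straight Frobenius count of $\m M$ gives $2N+2N_I-1$; the extra $N_I$ appears because in rows of types (c) and (e) the paper bounds the three unit-weight terms by $2\norm{\cdot}_2$ (not the sharper $\sqrt{3}\,\norm{\cdot}_2$), so those rows contribute $4$ rather than $3$. Likewise the cross terms $2\sqrt{2}\,\alpha(i)$ and $4\alpha(i)$ arise from squaring the per-row constants $(\sqrt{2}+\alpha)$ and $(2+\alpha)$ in types (d) and (e). None of these will fall out of a spectral-norm estimate or a column-based grouping of $\m M\diag(\m x)$; they are artifacts of the specific row-wise splits. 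If your goal is to match \eqref{eq:lipschitz_const_congested} exactly, switch to the row-by-row bounds and then sum $\gamma_i^2$ over the seven cases with the counts you already listed---that bookkeeping, not a Jacobian norm, is what produces the quoted expression.
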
}
\vspace{-0.2cm}
\begin{proof}
	{\color{black}
		See Appendix \ref{apdx:proposition2_proof}.}  
\end{proof}

\begin{myrem}
	The analytically derived Lipschitz constants $\gamma_u$ and $\gamma_c$ depend on the traffic network parameters, number of in- and out-flow ramps, and how many highway segments are labeled as congested and uncongested. Therefore, and depending on the classification of traffic modes in highways, the constants $\gamma_u$ and $\gamma_c$ will change. 
	\end{myrem}
It is important to notice that the results given in Theorem \ref{proposition1} and Corollary \ref{proposition2} are useful in the sense that, not only are they showing the nonlinearities to be Lipschitz, but more importantly, they also provide practical Lipschitz constants that can be actually used in Lipschitz-based state estimator and observer designs. Furthermore, these results make it possible to implement many methods in control theory which can potentially solve problems arising in traffic networks that can be cast into estimation and control problems. 
To that end, the next section proposes a robust state estimation method for the nonlinear dynamics~\eqref{eq:state_space_uncongested_ramps} considering limited number of measurements and unknown disturbances.  

\vspace{-0.1cm}
\section{{\color{black}Robust State Estimation Via $\mc{L}_{\infty}$ Observer}} \label{sec:robust_obs}
 As shown in previous sections, the traffic density of highways with multiple ramp flows can be modeled by a set of ODE, which is further represented by a nonlinear state-space equation. Moreover, it is also shown that the nonlinearity satisfies the local Lipschitz condition and the Lipschitz constant is derived for any arbitrary highway configuration. This enables us to design a certain type of observer to perform robust traffic density estimation, which is the focus of this section.

\subsection{Traffic Modeling under Disturbances and $\mc{L}_{\infty}$ Stability}

{\color{black} Here, we present the perturbed traffic dynamics. Specifically, we consider that the perturbation or uncertainty is due to unknown inputs, measurement inaccuracies, process disturbances, measurement noise, and sensor faults.} These can all be succinctly represented by vector $\m w(t)$, an unknown quantity. Given these disturbances, the nonlinear perturbed dynamics can be expressed as
\begin{subequations}\label{eq:state_space_general}
\begin{align}
\dot{\m x}(t) &= \mA \m x(t) + 	\m f(\m x) + \m {B_{\mathrm{u}}}\m u (t)+ \m {B_{\mathrm{w}}} \m w(t) \label{eq:state_space_general_a}\\
\m y(t) &= \mC \m x(t)+ \m {D_{\mathrm{w}}} \m w(t). \label{eq:state_space_general_b}
\end{align}
\end{subequations}
In the above model, \eqref{eq:state_space_general_a} represents \eqref{eq:state_space_uncongested_ramps} with unknown inputs $ \m {B_{\mathrm{w}}} \m w$ and \eqref{eq:state_space_general_b} is the linear measurement model with measurement noise $\m {D_{\mathrm{w}}} \m w$, where $\m y\in\mathbb{R}^p$ is the measurement vector and $\m C\in\mathbb{R}^{p\times n}$ is a matrix representing the configuration and location of the sensors. The disturbance vector $\m w\in\mathbb{R}^q$ is assumed to be bounded, with the corresponding matrices $\m {B_{\mathrm{w}}}$ and $\m {D_{\mathrm{w}}}$ are of appropriate dimensions. 
{\color{black}
Note that vector $\m w$ lumps all unknown inputs into a single vector.
For example, if $\m v_1$ represents unknown inputs and $\m v_2$ represents measurement noise, with corresponding matrices $\m {V_1}$ and $\m {V_2}$, such that the system dynamics are expressed as
\begin{subequations}\label{eq:state_space_general_rem}
	\begin{align}
	\dot{\m x}(t) &= \mA \m x(t) + 	\m f(\m x) + \m {B_{\mathrm{u}}}\m u (t)+ \m {V_1}  \m v_1(t)\\
	\m y(t) &= \mC \m x(t)+ \m {V_2} \m v_2(t),
	\end{align}
\end{subequations}
then defining $\m w = \bmat{\m v_1^{\top} & \m v_2^{\top}}^{\top}$ along with $\m {B_{\mathrm{w}}} = \bmat{\m {V_1} & \m O}$ and $\m {D_{\mathrm{w}}} = \bmat{\m O & \m {V_2}}$ of appropriate dimensions allows \eqref{eq:state_space_general_rem} to be expressed in form of \eqref{eq:state_space_general}. 
}
In this study, we are interested in the case when many highway segments do not have traffic sensor installations, i.e., $p < n$. Thus, the objective of the observer is to estimate the traffic density for the entire highway segments.

To design the observer, let $\hat{\m x}(t)$ be the observer's state (or estimation) vector and $\hat{\m y}(t)$ be the observer's measurement vector. The proposed observer dynamics follow a similar form to the classic Luenberger observer, and are given as
\begin{subequations} \label{eq:nonlinear_observer_dynamics}
\begin{align}	
\dot{\hat{\m x}}(t)&= \mA \hat{\m x}(t)+\m f(\hat{\m x}) + \m {B_{\mathrm{u}}}\m u(t) + \mL(\m y(t)-\hat{\m y}(t)) \\
\hat{\m y}(t) &= \mC \hat{\m x}(t),
\end{align}
\end{subequations}
where $\mL(\m y-\hat{\m y})$ is the Luenberger-type correction term with $\mL\in\mathbb{R}^{n\times p}$. {\color{black} In order to ensure the existence of such observer, it is assumed that the traffic sensors have been placed in such a way that they yield the pair $(\m A,\m C)$ detectable.} By defining the estimation error as $\m e(t)  \triangleq \m x (t)- \hat{\m x}(t)$, the error dynamics can be computed as 
\begin{align}
\hspace{-0.3cm}\dot{\m e} (t) = \left(\mA-\mL\mC\right)\m e(t) +\Delta \m f (t)+ \left(\m {B_{\mathrm{w}}}-\mL\m {D_{\mathrm{w}}}\right)\m w(t), \label{eq:est_error_dynamics}
\end{align}
where $\Delta \m f(t)\triangleq \m f(\m x)-\m f(\hat{\m x})$. Since the traffic dynamic model is determined by the choice of states' operational range, which can be either congested or uncongested, then it is helpful to have the following definition.
{\color{black}
\begin{mydef}\label{def:set_x}
	The set ${\mathbfcal{X}}\subset \mathbb{R}_+^n$ is defined as
	\begin{align}\label{eq:def_set_x}
	{\mathbfcal{X}} = \begin{cases}
	\mathbfcal{X}_{\m{\mathrm{u}}}, & \text{if the highway is uncongested} \\
	\mathbfcal{X}_{\m{\mathrm{c}}}, & \text{if the highway is congested.} \\
	\end{cases}		
	\end{align}
\end{mydef}}
{\color{black}
By using Definition \ref{def:set_x}, then we can simply invoke $\m x,\,\hat{\m x}\in\mathbfcal{X}$ regardless on the condition of the highway. The notion of $\mathcal{L}_{\infty}$ stability with performance level $\mu$ is introduced below. }
{\color{black}
\begin{mydef}\label{L_inf_stability}
	{\color{black}
Let $\m z \in \mathbb{R}^z$ be a performance output constructed as $\m z = \mZ \m e$ for a user-defined performance matrix $\mZ\in\mathbb{R}^{z\times n}$. Then, the nonlinear dynamics \eqref{eq:est_error_dynamics} is said to be $\mathcal{L}_{\infty}$ stable in $\mathbfcal{X}$ with performance level $\mu$ if the following conditions hold.}
\begin{enumerate}
\item The undisturbed system is uniformly asymptotically stable around the origin.
\item For any bounded disturbance $\m w \in \mathcal{L}_{\infty}$ and zero initial error $\m e_0 = 0$, we have $\norm{\m z}_2 \leq \mu \norm{\m w}_{\mathcal{L}_{\infty}}$.
\item There exists a function $\beta:\mathbb{R}^n\times \mathbb{R}_+\rightarrow \mathbb{R}_+$ such that, for any initial error $\m e_0$ and any bounded disturbance $\m w \in \mathcal{L}_{\infty}$, we have $\norm{\m z}_2 \leq\beta\left(\m e_0,\norm{\m w}_{\mathcal{L}_{\infty}}\right)$.
\item For any initial error $\m e_0$ and any bounded disturbance $\m w \in \mathcal{L}_{\infty}$, we have $\lim_{t\to\infty}\sup\,\norm{\m z}_2 \leq\mu\norm{\m w}_{\mathcal{L}_{\infty}}$.
\end{enumerate} 
\end{mydef}}

{\color{black} 
	The definition of $\mathcal{L}_{\infty}$ stability for error dynamics \eqref{eq:est_error_dynamics} with performance level $\mu$ described in Definition \ref{L_inf_stability} can be interpreted as follows, assuming that conditions 1-4 are satisfied. First, in the case when disturbance is not present, the estimation error will asymptotically converge towards zero. Note that this property is standard in many observer designs. 
	Second, in the presence of disturbance given that the initial error is equal to zero, the norm of performance vector $\m z(t)$ for any $t \geq t_0$ is guaranteed to be no more than a scalar multiple of the worst case disturbance, that is, $\norm{\m z}_2\leq \mu\norm{\m w}_{\mathcal{L}_{\infty}}$. Third, the norm of $\m z(t)$ is always upper bounded by a function of initial condition $\m e_0$ and worst case disturbance $\norm{\m w}_{\mathcal{L}_{\infty}}$. Fourth, if the initial error is nonzero, the norm of performance vector $\m z(t)$ will evolve such that it will not exceed the value of $\mu\norm{\m w}_{\mathcal{L}_{\infty}}$.  
	Note that whenever 
	$\mu = 0$, $\m z (t)$ (therefore, the estimation error) is irrelevant to the disturbance $\m w(t)$. On the other hand, large $\mu$ implies that 
	small change in $\m w(t)$ will greatly affect $\m z (t)$. In that regard, we should always try to make performance index $\mu$ as small as possible.
	In what follows we shift to the design of numerical procedure that, if solved successfully, ensures the estimation error dynamics  \eqref{eq:est_error_dynamics} to be $\mathcal{L}_{\infty}$ stable.
	 in the sense of Definition \ref{L_inf_stability}.}

\subsection{$\mc{L}_{\infty}$ Observer Design}
{\color{black}
In this section we present a sufficient condition to synthesize the $\mathcal{L}_{\infty}$ observer for Lipschitz nonlinear systems---described in the following theorem.} 
\begin{theorem}\label{l_inf_theorem}
Consider the nonlinear system with unknown input and measurement noise \eqref{eq:state_space_general} and observer \eqref{eq:nonlinear_observer_dynamics} where $\m x,\hat{\m x}\in\mathbfcal{X}$, $\m w \in \mathcal{L}_{\infty}$, and the nonlinear function $\m f :\mathbb{R}^{n}\rightarrow\mathbb{R}^{n}$ is locally Lipschitz in $\mathbfcal{X}$ with Lipschitz constant $\gamma$. If there exist $\mP\in\mathbb{S}^n_{++}$, $\mY\in\mathbb{R}^{n\times p}$, $\epsilon,\mu_0,\mu_1,\mu_2\in \mathbb{R}_{+}$, and $\alpha \in \mathbb{R}_{++}$ so that the following optimization problem is solved
\begin{mdframed}[style=MyFrame]
\begin{subequations}\label{eq:l_inf_theorem}
\begin{align}
&\m{\mathcal{L}_{\infty}}-\m{\mathrm{Observer}} \nonumber\\
 &\minimize_{\m P, \m Y, \epsilon, \alpha, \mu_{0,1,2}} \quad \mu_0\mu_1 + \mu_2 \label{eq:l_inf_theorem_0}\\
&\subjectto \nonumber \\
&\bmat{ \mA^{\top}\mP + \mP\mA -\mC^{\top}\mY^{\top} \\ -\mY\mC+\alpha\mP+\epsilon\gamma^2\mI&*&*\\
\mP & -\epsilon\mI&*\\
\m {B_{\mathrm{w}}}^{\top}\mP-\m {D_{\mathrm{w}}}^{\top}\mY^{\top}&\mO&-\alpha\mu_0\mI} \preceq 0 \label{eq:l_inf_theorem_1}\\
&\bmat{-\mP & * & * \\
\mO & -\mu_2\mI & *\\
\mZ & \mO & -\mu_1\mI}\preceq 0, \label{eq:l_inf_theorem_2}
\end{align}
\end{subequations}
\end{mdframed}
\vspace{-0.1cm}
then the error dynamics \eqref{eq:est_error_dynamics} is $\mathcal{L}_{\infty}$ stable with performance level $\mu = \sqrt{\mu_0\mu_1+\mu_2}$ for performance output given as $\m z = \mZ \m e$. In this case, the observer gain is given as $\mL = \mP^{-1}\mY$.
\end{theorem}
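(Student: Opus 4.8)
The plan is to certify the four items of Definition \ref{L_inf_stability} through a single quadratic Lyapunov function $V(\m e) = \m e^{\top}\mP\m e$ evaluated along the error dynamics \eqref{eq:est_error_dynamics}, exploiting the Lipschitz bound $\norm{\Delta\m f}_2 \le \gamma\norm{\m e}_2$ that Propositions \ref{proposition1} and \ref{proposition2} guarantee on $\mathbfcal{X}$. First I would substitute $\mY = \mP\mL$, so that $\mP(\mA-\mL\mC) = \mP\mA - \mY\mC$; this is precisely the change of variables that linearizes the unknown gain inside \eqref{eq:l_inf_theorem_1}. Stacking the augmented vector $\m\xi \triangleq \bmat{\m e^{\top} & \Delta\m f^{\top} & \m w^{\top}}^{\top}$, a direct computation shows that the left-hand-side matrix of \eqref{eq:l_inf_theorem_1} is exactly the quadratic form associated with $\dot V + \alpha V + \epsilon\gamma^2\norm{\m e}_2^2 - \epsilon\norm{\Delta\m f}_2^2 - \alpha\mu_0\norm{\m w}_2^2$: expanding $\dot V = 2\m e^{\top}\mP\dot{\m e}$ reproduces the symmetric $(1,1)$ block and the cross blocks $(2,1)=\mP$ and $(3,1)=\m{B_{\mathrm{w}}}^{\top}\mP-\m{D_{\mathrm{w}}}^{\top}\mY^{\top}$, while the diagonal terms $-\epsilon\mI$ and $-\alpha\mu_0\mI$ supply the remaining two quadratics.

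The second step turns feasibility of \eqref{eq:l_inf_theorem_1} into a dissipation inequality. Since that matrix is negative semidefinite, $\m\xi^{\top}(\cdot)\m\xi \le 0$ for every $\m\xi$, giving
\[
\dot V + \alpha V - \alpha\mu_0\norm{\m w}_2^2 \le -\epsilon\left(\gamma^2\norm{\m e}_2^2 - \norm{\Delta\m f}_2^2\right).
\]
Here the Lipschitz property enters as an S-procedure multiplier: because $\gamma^2\norm{\m e}_2^2 - \norm{\Delta\m f}_2^2 \ge 0$ and $\epsilon \ge 0$, the right-hand side is nonpositive, so $\dot V \le -\alpha V + \alpha\mu_0\norm{\m w}_{\mathcal{L}_{\infty}}^2$. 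Applying the comparison lemma yields $V(t) \le e^{-\alpha t}V(0) + \mu_0\bigl(1-e^{-\alpha t}\bigr)\norm{\m w}_{\mathcal{L}_{\infty}}^2$, from which $\limsup_{t\to\infty}V(t)\le\mu_0\norm{\m w}_{\mathcal{L}_{\infty}}^2$ and, for $\m e_0 = \m 0$, $V(t)\le\mu_0\norm{\m w}_{\mathcal{L}_{\infty}}^2$ for all $t$.

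The third step converts $V$ into a bound on the performance output through \eqref{eq:l_inf_theorem_2}. Its middle row/column decouples and merely enforces $\mu_2\ge 0$, whereas a Schur complement on the remaining $2\times 2$ block gives $\mZ^{\top}\mZ \preceq \mu_1\mP$, hence $\norm{\m z}_2^2 = \m e^{\top}\mZ^{\top}\mZ\m e \le \mu_1 V$. Combining with the previous estimate (and the $\mu_2\norm{\m w}_{\mathcal{L}_{\infty}}^2$ slack the framework reserves for any direct $\m w$-to-$\m z$ feedthrough, which is nonnegative) produces $\norm{\m z}_2^2 \le (\mu_0\mu_1+\mu_2)\norm{\m w}_{\mathcal{L}_{\infty}}^2$, i.e. the advertised level $\mu = \sqrt{\mu_0\mu_1+\mu_2}$. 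I would then read off the four conditions of Definition \ref{L_inf_stability}: setting $\m w = \m 0$ gives $\dot V\le-\alpha V$ and thus uniform asymptotic (indeed exponential) stability of the origin (item 1); the zero-initial-error bound gives item 2; the full comparison-lemma estimate together with $V(0)\le\lambda_{\max}(\mP)\norm{\m e_0}_2^2$ supplies the bounding function $\beta$ for item 3; and the $\limsup$ estimate gives item 4.

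The main obstacle is the nonlinearity $\Delta\m f$: unlike a linear plant, the cross term $2\m e^{\top}\mP\Delta\m f$ is sign-indefinite and cannot be absorbed directly into $-\alpha V$. The device that resolves this is precisely the $\epsilon$-weighted relaxation encoded by the $\epsilon\gamma^2\mI$ and $-\epsilon\mI$ blocks, which lets the Lipschitz inequality dominate the cross term without conservatively bounding $\mL$. A secondary, implementation-level difficulty worth flagging is that \eqref{eq:l_inf_theorem_1} is bilinear rather than linear---the products $\alpha\mP$ and $\alpha\mu_0\mI$, together with the $\mu_0\mu_1$ term in the objective \eqref{eq:l_inf_theorem_0}, are not jointly convex---so rendering the search a genuine SDP requires fixing or gridding $\alpha$ and handling the scalar product; this affects only how a feasible point is found, not the logical implication proved here.
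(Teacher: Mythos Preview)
Your proof is correct and follows essentially the same route as the paper: a quadratic Lyapunov function, the S-procedure with multiplier $\epsilon$ against the Lipschitz inequality, the linearizing substitution $\mY=\mP\mL$, and a Schur complement on \eqref{eq:l_inf_theorem_2}. The only difference is presentational: the paper outsources the step from the dissipation inequality $\dot V + \alpha V \le \alpha\mu_0\norm{\m w}_2^2$ and the output bound $\norm{\m z}_2^2 \le \mu_1 V + \mu_2\norm{\m w}_2^2$ to \cite[Theorem~1]{pancake2002analysis}, whereas you work through the comparison lemma explicitly and verify the four items of Definition~\ref{L_inf_stability} by hand---a more self-contained variant of the same argument.
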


\begin{proof}
	{\color{black}
		See Appendix \ref{apdx:thrm2_proof}.}  
\end{proof}

Realize that the $\mathcal{L}_{\infty}\text{-}\mathrm{Observer}$ problem  is nonconvex due to bilinear terms appearing in  \eqref{eq:l_inf_theorem_0} and \eqref{eq:l_inf_theorem_1}. Specifically, the problem is nonconvex in terms of variables $\alpha,\,\mu_0,\,\mu_1$, and $\mP$. To render  the $\mathcal{L}_{\infty}\text{-}\mathrm{Observer}$ as a convex optimization problem, one can set the values of $\alpha$ and either $\mu_0$ or $\mu_1$ a priori and solve a semidefinite program (SDP). An alternative to this approach is utilizing a successive convex approximation of the bilinear terms in~\eqref{eq:l_inf_theorem}; the authors' recent work includes examples on how this can be applied~\cite{Taha2018arXiv}.

\begin{myrem}
In the proposed observer design, we consider that the Lipschitz constants and state-space matrices are fixed. If a change to the classification of the two traffic modes takes place, the observer design problem needs to be solved again for updated observer gain matrix $\m L$. This requires a scalable SDP solver, which is discussed in the end of the next section.
\end{myrem}

\vspace{-0.4cm}
\section{\color{black}Numerical Tests under Different Scenarios}\label{sec:numerical}
The objective of this section is three-fold. First, demonstrating that the formulated SDP formulation graciously scale with the number of highway segments through a scalable SDP solver. Second, showcasing that the derived Lipschitz constant is not conservative and that the SDP for $\mc{L}_{\infty}$ problem can in fact yield feasible solutions. Third, demonstrating the applicability of the proposed state estimation method under high magnitude disturbances in the process dynamics and the measurement model. All simulations are performed using MATLAB R2019a running on a 64-bit Windows 10 with 3.4GHz Intel\textsuperscript{R} Core\textsuperscript{TM} i7-6700 CPU and 16 GB of RAM with YALMIP \cite{Lofberg2004} as the interface to solve all convex problems.{\color{black} Throughout this section, in order to make the numerical test results more intuitive for the reader, we have changed the units for traffic density $\m x(t)$, error $\m e(t)$, and error norm $\norm {\m e(t)}_2$ from vehicles/m to vehicles/km and multiply the norm of performance output $\m z(t)$ and infinity norm of disturbance $\m w(t)$ by $10^3$.}

{\color{black}
	\begin{figure}[t]
	\vspace{-0.2cm}
	\centering 
	\subfloat[]{\includegraphics[keepaspectratio=true,scale=0.4]{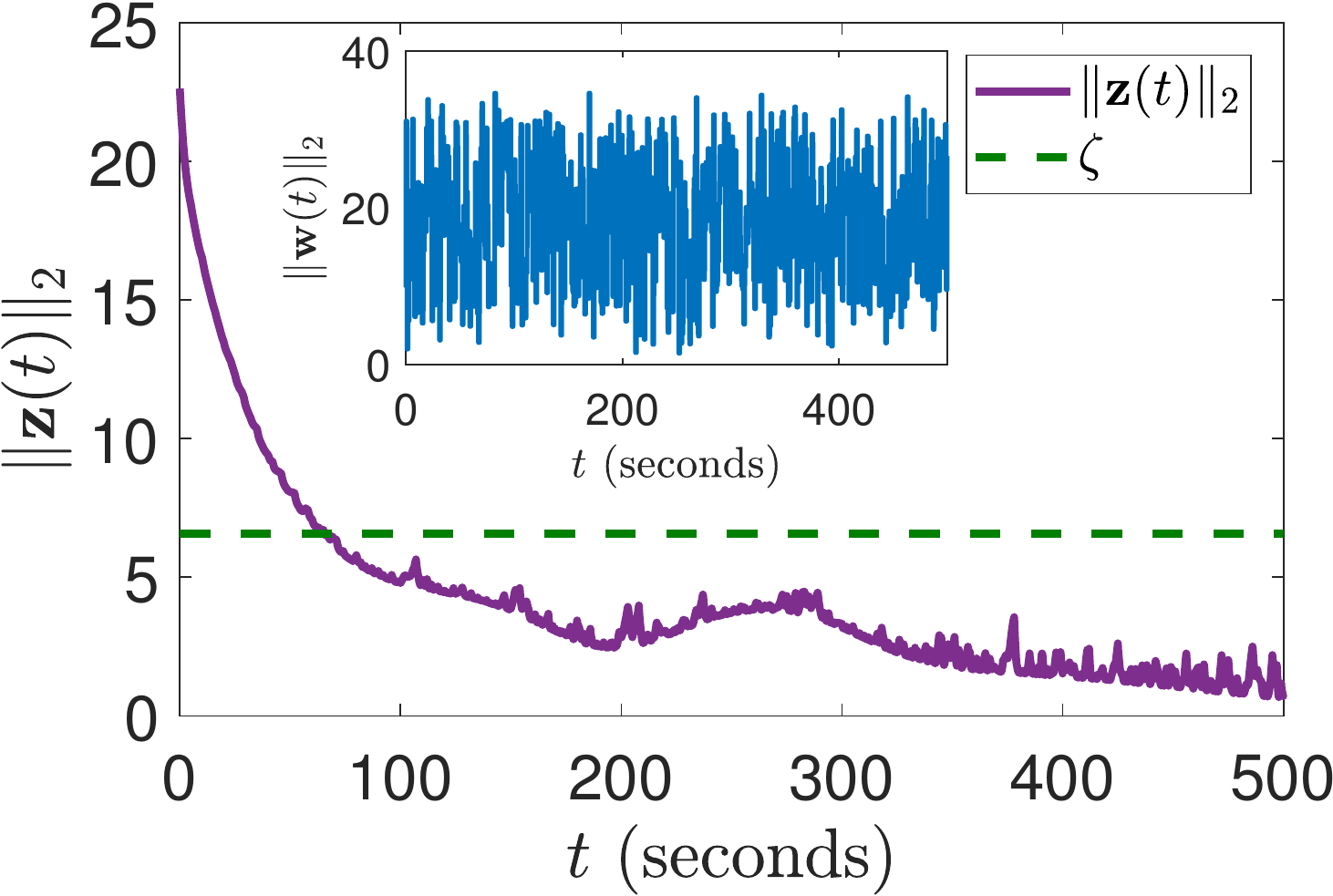}}{}\vspace{-0.3cm}
	\subfloat[]{\includegraphics[keepaspectratio=true,scale=0.4]{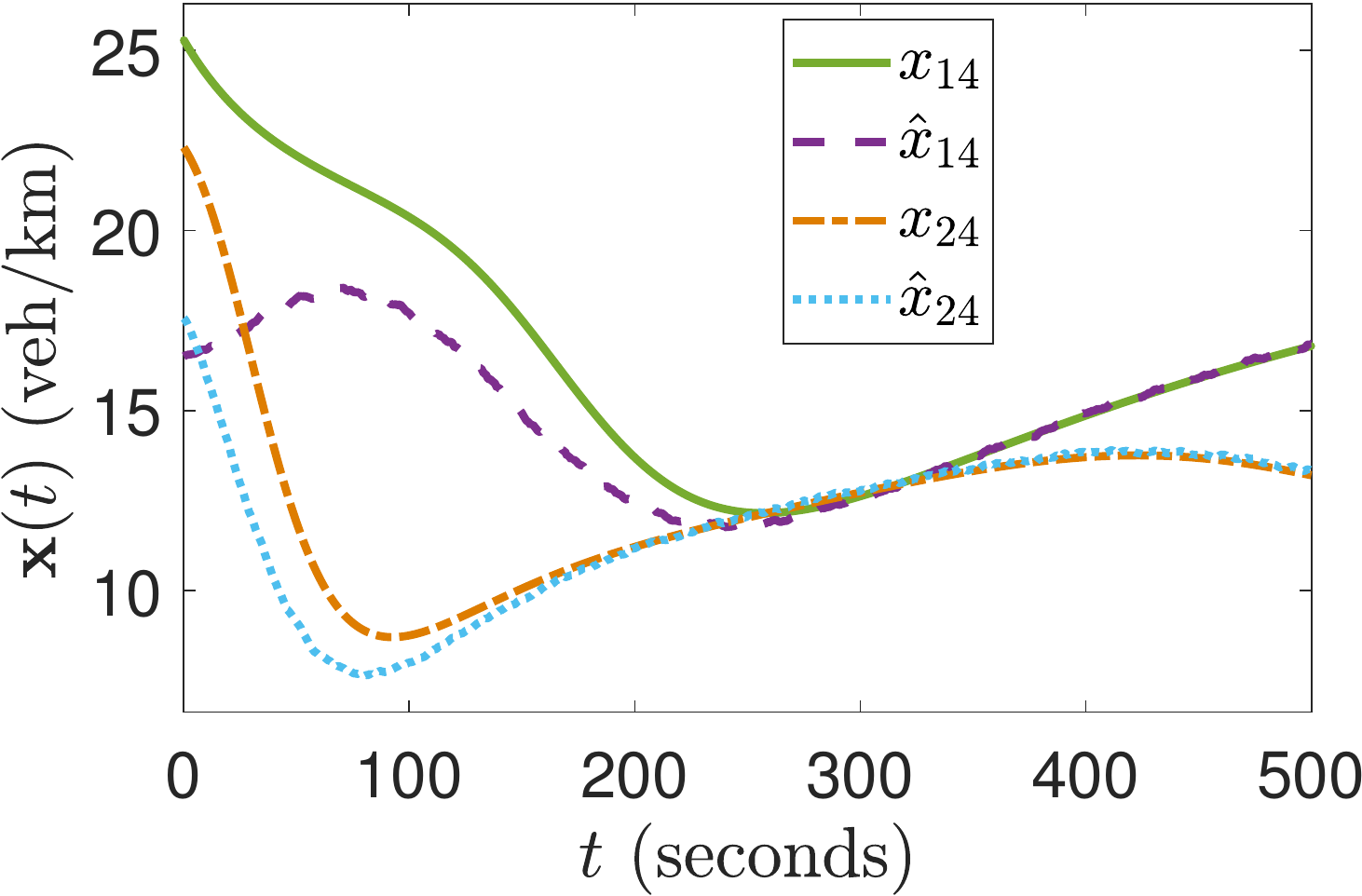}}{}\hspace{-0.0cm}\vspace{-0.1cm}
	\caption{(a) Comparison between the norm of performance output $||{\m z}(t)||_2$ and disturbance where $\zeta = \mu||{\m w}(t)||_{\mathcal{L}_{\infty}}$ for uncongested (free-flow) case and (b) the trajectories of two unmeasured states.}
	\label{fig:uncongested_uncertainty_1}\vspace{-0.4cm}
\end{figure}

\begin{figure}[t]
	\vspace{0.1cm}
	\centering 
	\subfloat[]{\includegraphics[keepaspectratio=true,scale=0.4]{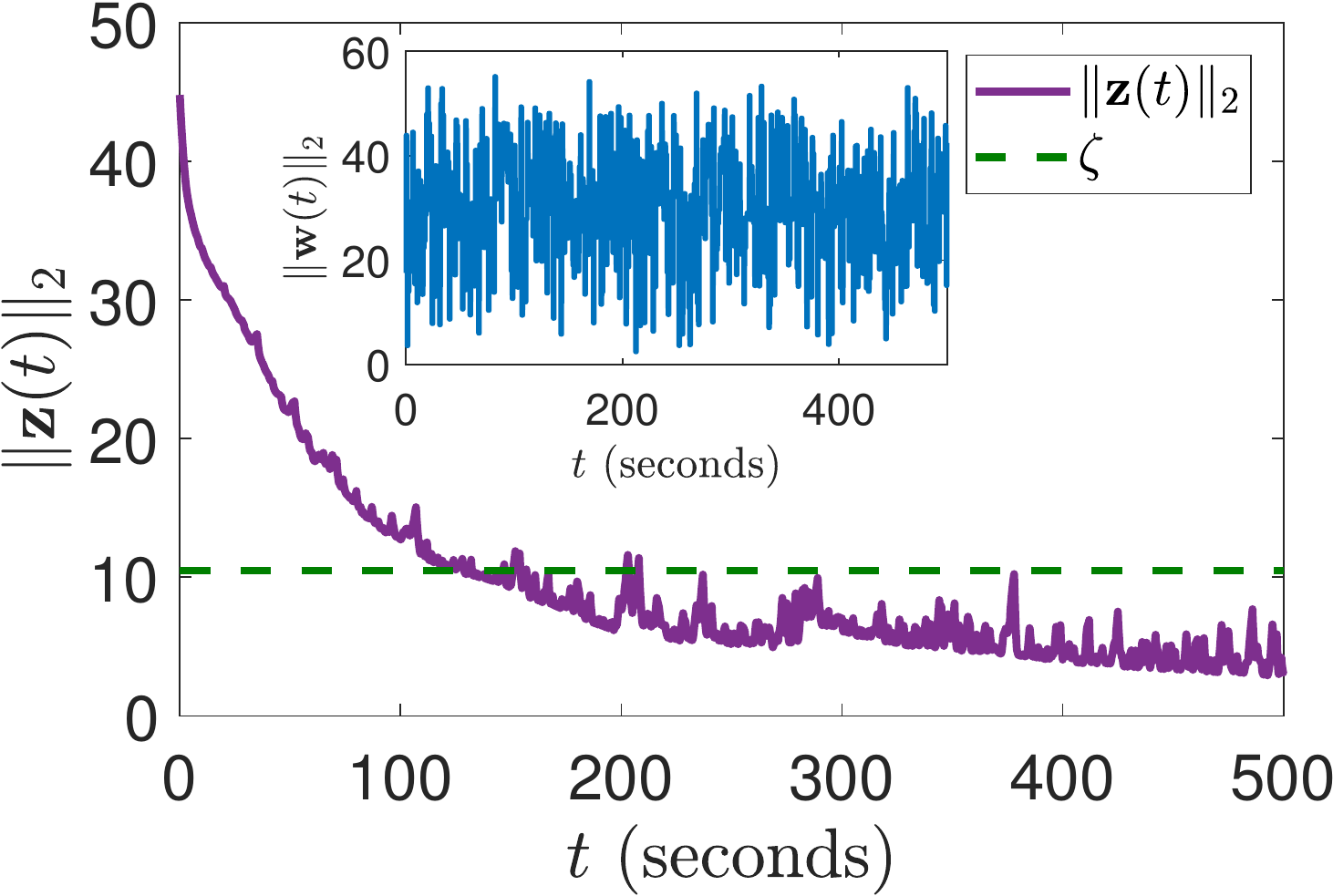}}{}\vspace{-0.3cm}
	\subfloat[]{\includegraphics[keepaspectratio=true,scale=0.4]{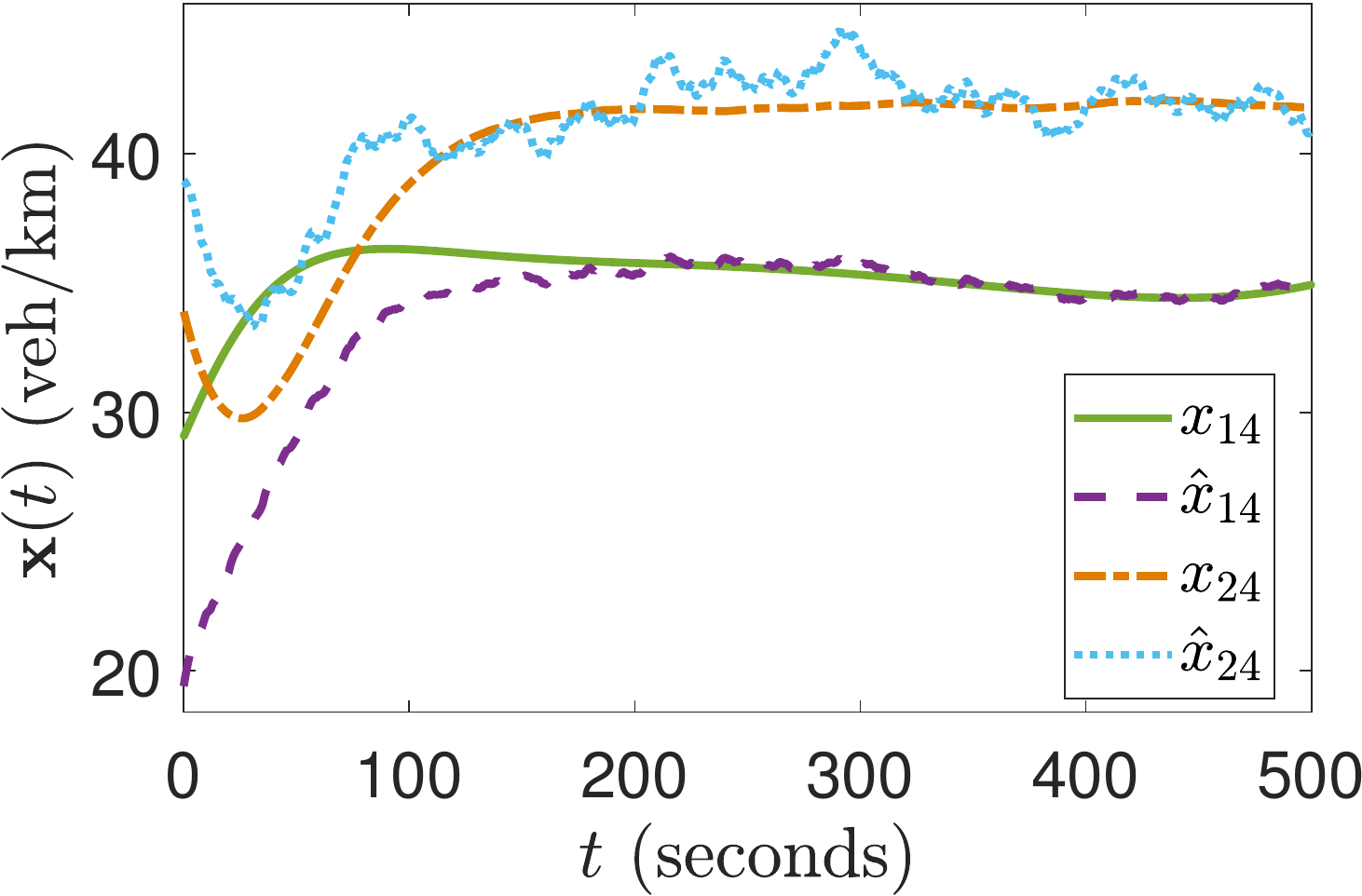}}{}\hspace{-0.0cm}\vspace{-0.1cm}
	\caption{(a) Comparison between the norm of performance output $||{\m z}(t)||_2$ and disturbance where $\zeta = \mu||{\m w}(t)||_{\mathcal{L}_{\infty}}$ for congested (jammed) case and (b) the trajectories of two unmeasured states.}
	\label{fig:congested_uncertainty_1}\vspace{-0.5cm}
\end{figure}
}

\vspace{-0.3cm}
 \color{black}
\subsection{Traffic Density Estimation on A Simple Highway}\label{ssec:test_1}
  The next section showcases the scalability and performance of the proposed methods on a larger system.  Here, we consider a simple highway consisting of and characterized by the following parameters.
  {\color{black}
\begin{itemize}
	\item A total of $n=30$ states, with $N=25$ highway segments, $N_I=3$ on-ramps connected to the $\nth{2}$, $\nth{3}$, and $\nth{4}$ highway segments, and
$N_O=2$ off-ramps connected to the $\nth{22}$ and $\nth{24}$ highway segments.
	\item $p=7$, where $4$ sensors on the $\nth{1}$, $\nth{7}$, $\nth{15}$, and $\nth{25}$ highway segments, one sensor on the $\nth{1}$ on-ramp, and two sensors on both off-ramps such that $\m C \in \mathbb{R}^{7 \times 30}$. It is worth noticing that the considered example is \textit{under-sensed}, in the sense that not all highway segments are equipped with traffic sensors. 
	\item Parameters: $v_f = 31.3$ m/s, $\rho_m = 0.053$ vehicles/m, and $l = 500$ m; the above parameters are adapted from \cite{Contreras2016}, which are obtained from traffic detectors measuring I-15 NB in Las Vegas, Nevada.
	\item For the uncongested (free-flow) case, the exit ratio and traffic flow are chosen such that $\alpha_{1,2} = 0.05$ and $\m u(t) = \bmat{0.2&0.05\m 1_3^{\top}&0.013\m 1_2^{\top}}^{\top}$ for all $t \in [0, 500] \sec$.
	\item For the congested (jammed) case, the exit ratio and traffic flow are chosen such that $\alpha_{1,2} = 0.8$ and $\m u(t) = \bmat{0.25&0.1\m 1_3^{\top}&0.025\m 1_2^{\top}}^{\top}$ for all $t \in [0, 500] \sec$. 
\end{itemize}
}
{\color{black}
Herein we aim to estimate the traffic density on highway segments and on-ramps that are not equipped with sensors. The corresponding Lipschitz constant for the uncongested case is $\gamma_u = 0.5134$, whereas for the congested case is $\gamma_c = 1.0776$, which is obtained from using Eqs. \eqref{eq:lipschitz_const_uncongested} and \eqref{eq:lipschitz_const_congested} given in Propositions \ref{proposition1} and \ref{proposition2}. For the uncongested case, the initial conditions for the system and observer are carefully chosen such that $0 \leq x_i(0) \leq \rho_c$ and $0 \leq \hat{x}_i(0) \leq \rho_c$ for all $i \in {\mathbfcal{E}}$, $i \in \hat{\mathbfcal{E}}$, and $i \in \check{\mathbfcal{E}}$. For the congested case, for all $i \in {\mathbfcal{E}}$, $i \in \hat{\mathbfcal{E}}$, and $i \in \check{\mathbfcal{E}}$ we have $\rho_c < x_i(0) \leq \rho_m$ and $\rho_c < \hat{x}_i(0) \leq \rho_m$. Note that the observer initial conditions are always different than the initial conditions of the system.

{\color{black}
\begin{figure}[t]
\centering 
\subfloat[]{\includegraphics[keepaspectratio=true,scale=0.4]{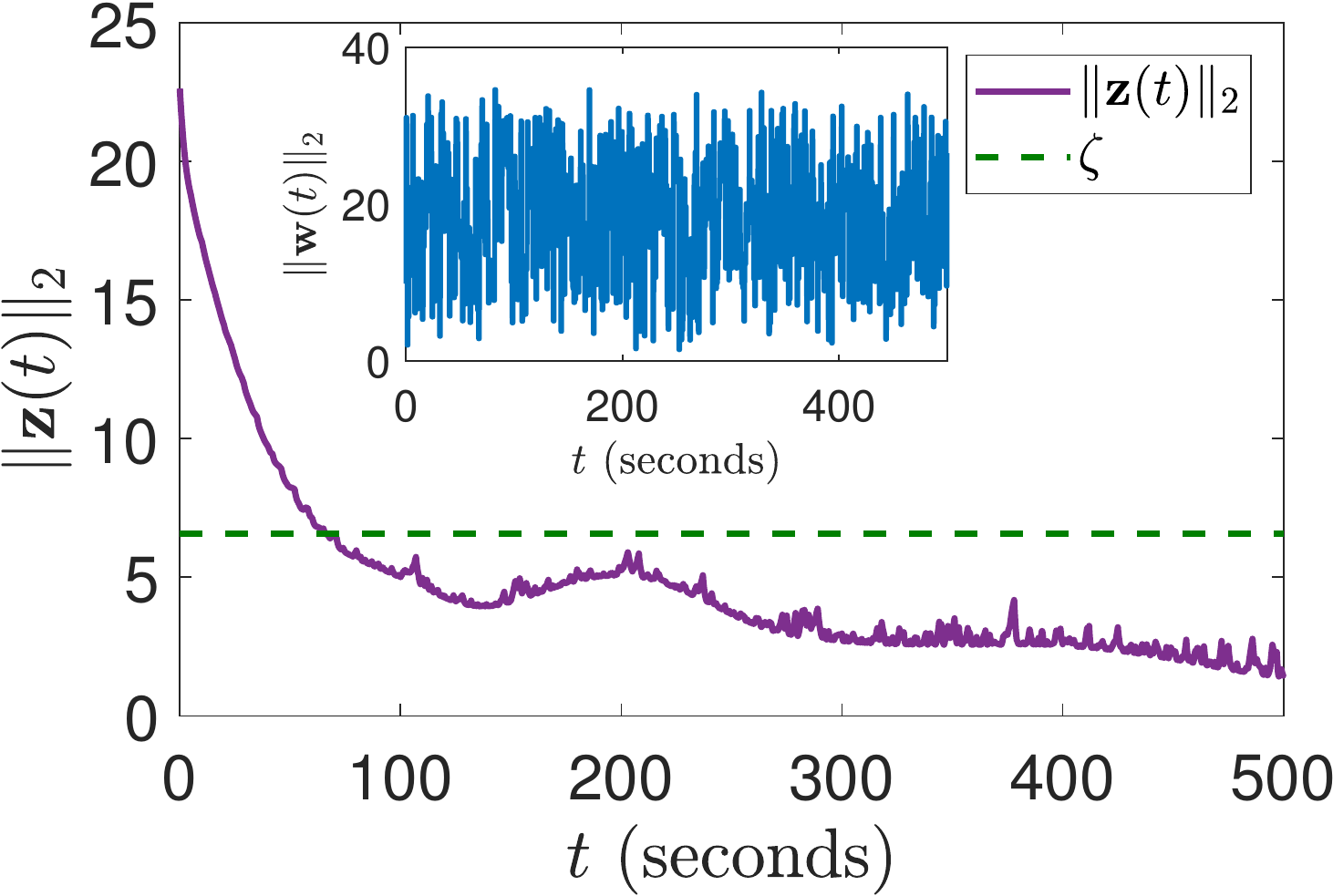}}{}\vspace{-0.3cm}
\subfloat[]{\includegraphics[keepaspectratio=true,scale=0.4]{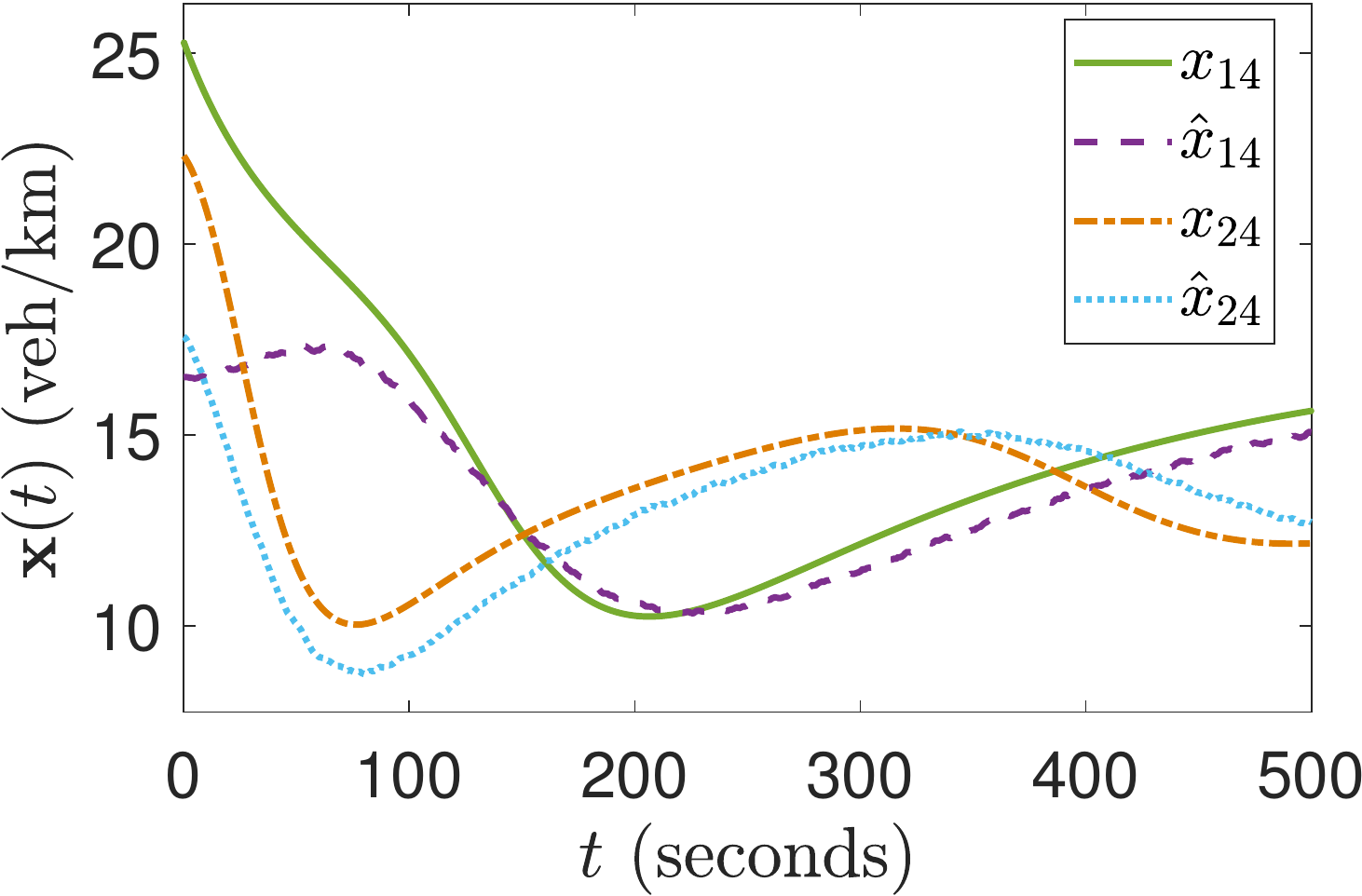}}{}\hspace{-0.0cm}\vspace{-0.1cm}
	\caption{(a) Comparison between the norm of performance output $||{\m z}(t)||_2$ and disturbance where $\zeta = \mu||{\m w}(t)||_{\mathcal{L}_{\infty}}$ for uncongested case with $20\%$ model uncertainty and (b) the trajectories of two unmeasured states.}
	\label{fig:uncongested_uncertainty}\vspace{-0.4cm}
\end{figure}

\begin{figure}[t]
\centering 
\subfloat[]{\includegraphics[keepaspectratio=true,scale=0.4]{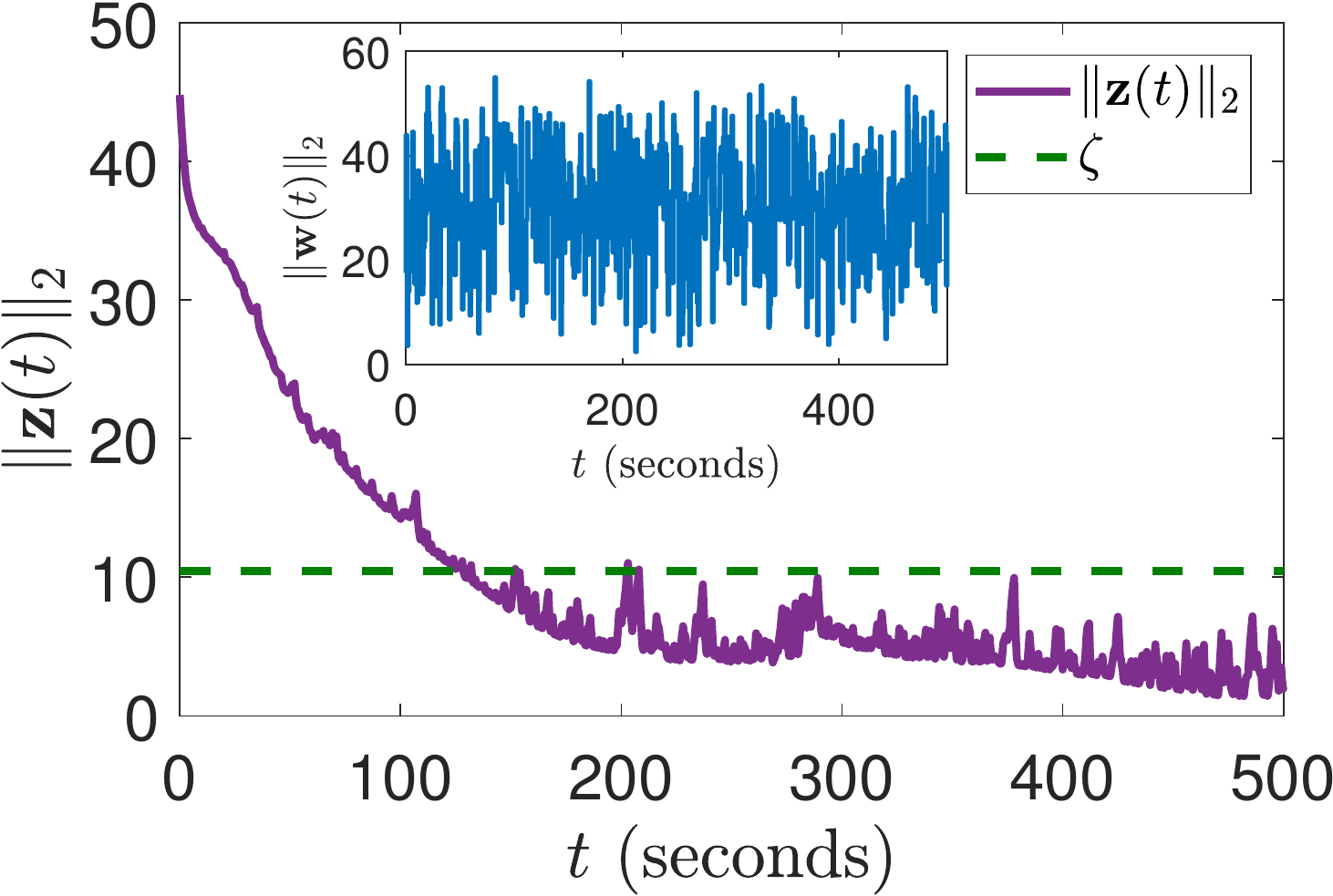}}{}\vspace{-0.3cm}
\subfloat[]{\includegraphics[keepaspectratio=true,scale=0.4]{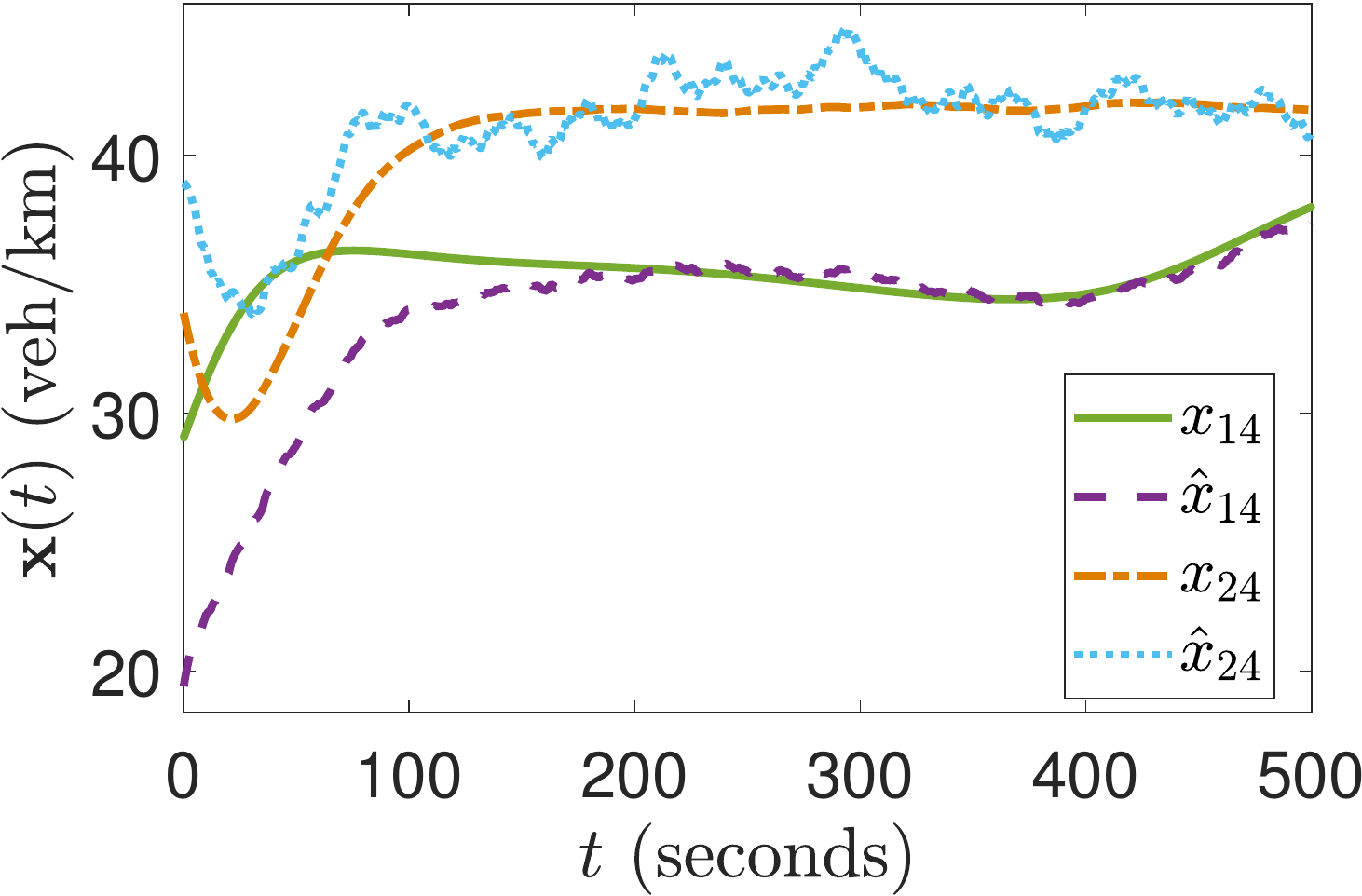}}{}\hspace{-0.0cm}\vspace{-0.1cm}
	\caption{(a) Comparison between the norm of performance output $||{\m z}(t)||_2$ and disturbance where $\zeta = \mu||{\m w}(t)||_{\mathcal{L}_{\infty}}$ for congested case with $20\%$ model uncertainty and (b) the trajectories of two unmeasured states.}
	\label{fig:congested_uncertainty}\vspace{-0.5cm}
\end{figure} 
}

In this simulation, we use the $\mathcal{L}_{\infty}\text{-}\mathrm{Observer}$ described in Theorem \ref{l_inf_theorem}. To obtain a convex problem we set $\alpha = 10^{-3}$ and $\mu_1 = 10^4$. We use SDPNAL+ \cite{Yang2015} to solve the convex problem. The performance matrix is chosen to be $\mZ = \mI$ and specifically, we consider the case when there exist some kind of random disturbances acting as unknown input and measurement noise. The disturbance vector $\m w$ has the following structure
\begin{align}
\m w(t) = 10^3\bmat{0.15\m u(t) \cdot r(t) \\ 0.15\m x(t) \cdot r(t) }.\label{eq:dist_signal_1}
\end{align} 
In \eqref{eq:dist_signal_1}, we define $r:[0,\infty)\rightarrow [-1,1]$ as a random mapping. The corresponding disturbance matrices are chosen so that $\m {B_{\mathrm{w}}} = \bmat{\m {B_{\mathrm{u}}} & \mO}$ and $\m {D_{\mathrm{w}}}= \bmat{\mO & \m {C}}$. This particular choice makes the random parts of unknown input and measurement noise to have maximum values $15\%$ of $\m u(t)$ and $\m x(t)$. Note that the artificially added unknown input $\m w$ is significant and of high magnitude in comparison with the magnitude of the states (the $10^3$ multiplier is due to the change in the units from vehicles/m to vehicles/km). We have considered that to merely test the robustness of the developed estimation method.

The simulations are performed from $t = 0$ to $t_f = 500 \,\mathrm{sec}$ and the results for both cases can be seen in Fig. \ref{fig:uncongested_uncertainty_1} and Fig. \ref{fig:congested_uncertainty_1}. From these figures, we can infer that the observer's trajectories are successfully following the actual system's trajectories. The computed performance index is $\mu = 0.1899$ with the $\mathcal{L}_{\infty}$ norm of the disturbance is $||{\m w}||_{\mathcal{L}_{\infty}}=34.66$ for the uncongested case. For the congested case, $\mu = 0.1899$ and $||{\m w}||_{\mathcal{L}_{\infty}}=55.19$. We observe that the norm of performance outputs ${\m z}$ (where $\m z = \mZ \m e$) for both cases converge, albeit fluctuating due to disturbances, to a close vicinity of zero. Realize that, from these results, the definition of $\mathcal{L}_{\infty}$ stability is actually satisfied, as the norm of the performance output at the steady state region is below $||{\m w}||_{\mathcal{L}_{\infty}}$ multiplied by $\mu$---see Fig. \ref{fig:uncongested_uncertainty_1} (a) and Fig. \ref{fig:congested_uncertainty_1} (a). This corroborates the analytical results given in Theorem~\ref{l_inf_theorem} and showcases that even under high-magnitude disturbances, and using only a sparse combination of traffic sensors, excellent real-time estimates can be generated to approximate the traffic state of a stretched highway. In addition to these results, we actually have also developed a discrete-time version of the $\mathcal{L}_{\infty}$ observer---numerical test results for the CTM with triangular fundamental diagram demonstrates the effectiveness of this approach for traffic density estimation. Nonetheless, we refrain from showing the results here as this paper deals with traffic density estimation on continuous-time traffic model.}

\begin{figure}[t]
	\centering 
	\subfloat[]{\includegraphics[keepaspectratio=true,scale=0.4]{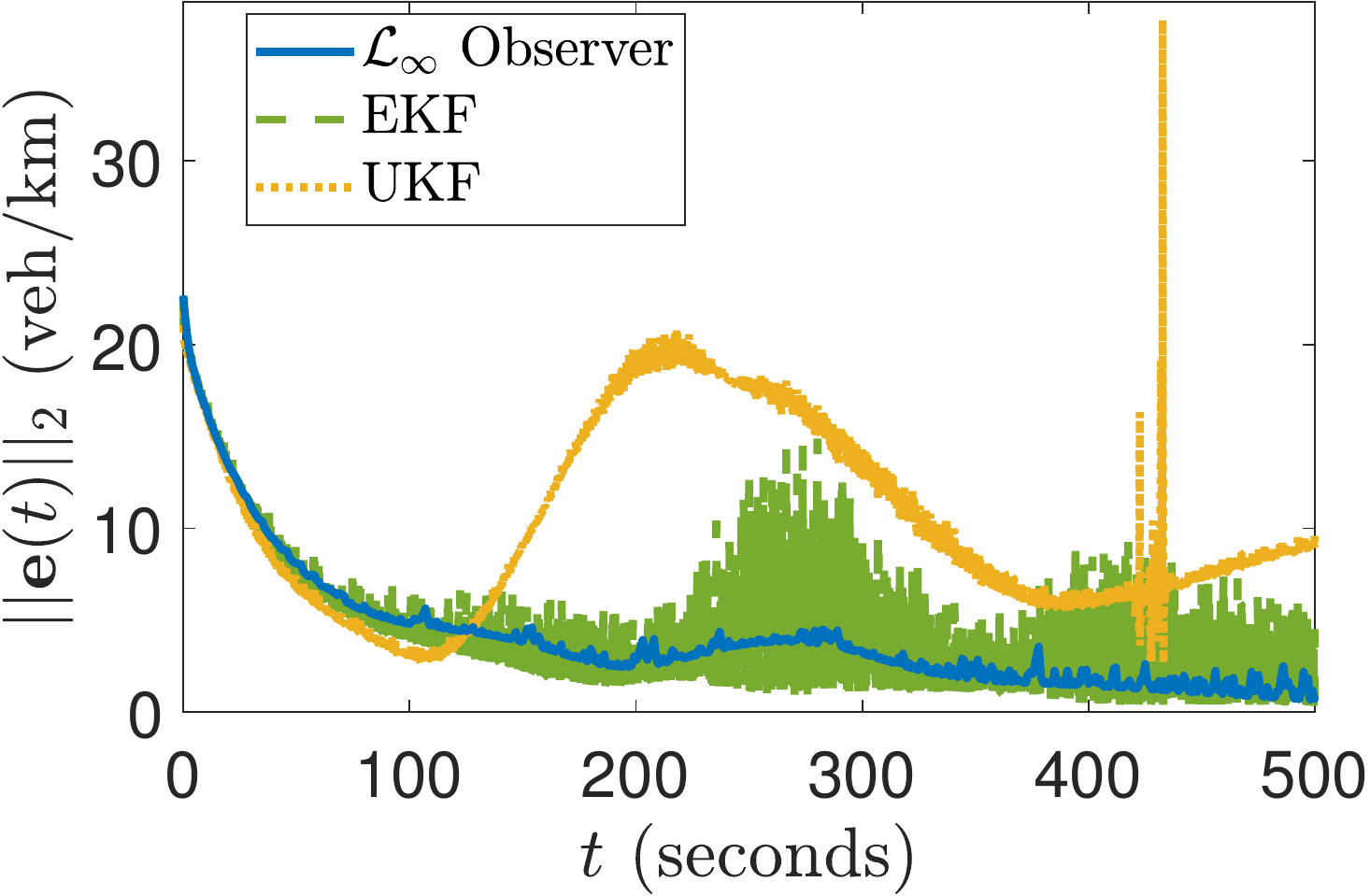}}{}\vspace{-0.3cm}
	\subfloat[]{\includegraphics[keepaspectratio=true,scale=0.4]{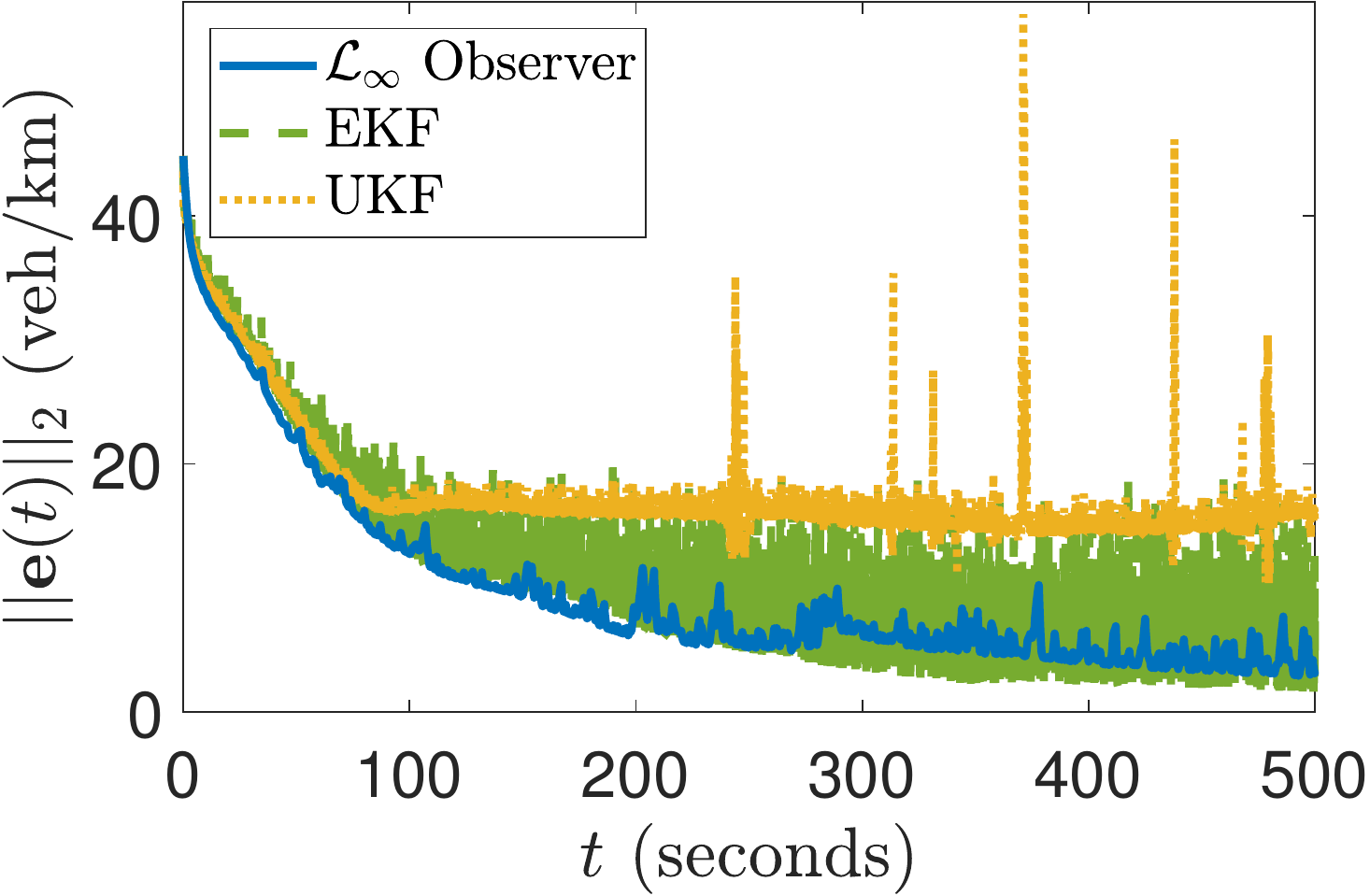}}{}\hspace{-0.0cm}\vspace{-0.1cm}
	\caption{Comparison of estimation error norm between $\mathcal{L}_{\infty}$ observer, EKF, and UKF for Highway \textbf{A} in (a) uncongested case and (b) congested case.}
	\label{fig:EKF_UKF_comparison_30}\vspace{-0.4cm}
\end{figure}

{\color{black}
\subsection{Traffic Density Estimation Under Model Uncertainty}~\label{sec:numerical-param}
We consider model uncertainty and process and measurement noise such that the traffic dynamics can be written as
\begin{subequations}\label{eq:state_space_plant_uncertainty_2}
	\begin{align}
	\dot{\m x}(t) &= (\mA +\Delta\mA)\m x(t) + 	(\mI + \Delta\mI)\m f(\m x)\nonumber \\ & \quad+ (\m {B_{\mathrm{u}}} + \Delta\m {B_{\mathrm{u}}})\m u (t)+\m{B_{\mathrm{w}}}\m {w} (t) \label{eq:state_space_general_a_2}\\
	\m y(t) &= \mC \m x(t)+ \m{D_{\mathrm{w}}}\m {w} (t), \label{eq:state_space_general_b_2}
	\end{align}
\end{subequations}
 where $\m{B_{\mathrm{w}}}$, $\m {D_{\mathrm{w}}}$, and $\m w(t)$ in \eqref{eq:state_space_plant_uncertainty_2} are exactly the same as the ones used in the previous simulation.
 To simulate $20\%$ model parametric additive uncertainty, we use $\kappa = 0.2$ such that $\Delta\mA = \kappa\mA$, $\Delta\mI = \kappa\mI$, and $\Delta\m {B_{\mathrm{u}}}=\kappa \m {B_{\mathrm{u}}}$.
 The $\mathcal{L}_{\infty}$ observer itself has the structure described in \eqref{eq:nonlinear_observer_dynamics}, i.e., without adding the parametric uncertainty.

In this test we again consider the two cases: uncongested and congested. We employ the same highway setup, including the configuration of traffic sensors, as the one described in Section \ref{ssec:test_1}. The results of this numerical test are given in Fig. \ref{fig:uncongested_uncertainty} and Fig. \ref{fig:congested_uncertainty}. From these figures we see that (i) we successfully obtain converging estimation error and (ii) the definition of $\mathcal{L}_{\infty}$ stability is indeed empirically satisfied on both scenarios. The performance indexes for both cases are equal with value $0.01899$. 
The norm of worst case disturbance is $34.64$ for the uncongested case and $55.01$ for the congested case.
From these results, we can conclude that the proposed observer can handle model uncertainty as well as process and measurement noise with acceptable performance for the given performance indexes.  
}

\vspace{-0.545cm}

{\color{black}
\subsection{Comparative Study With Kalman Filter-Based Estimators}
This section is devoted for assessing the performance of the proposed $\mathcal{L}_{\infty}$ observer relative to the performance of other model-driven methods for traffic density estimation. In particular, Kalman filter-based approach for nonlinear systems, such as the Extended Kalman Filter (EKF) and  the Unscented Kalman Filter (UKF), have been extensively utilized with various traffic models and fundamental diagrams for performing traffic state estimation---see \cite[Section 5.1]{seo2017traffic} for a comprehensive survey and discussion. Here we compare our $\mathcal{L}_{\infty}$ observer with EKF and UKF for estimating traffic density on two highway systems of different size. In brief, EKF is a variant of Kalman filter designed specifically for nonlinear systems. EKF has many similarities with Kalman filter except that it utilizes the first-order Taylor approximation to obtain Jacobian matrices of the nonlinear process and measurement models that represent the linearized dynamics around the previous estimated state \cite{ribeiro2004kalman}. Unlike EKF, UKF is a derivative-free state estimator that only relies on the nonlinear process and measurement models of the system and uses an \textit{unscented transformation} to extract, and later to estimate, the mean and covariance data that have gone through a nonlinear transformation---see \cite{wan2000unscented} for a detailed theoretical framework and algorithm of UKF.

The first highway considered in this numerical test, referred to as Highway \textbf{A}, consists of $30$ segments with parameters described in Section \ref{ssec:test_1}. The other highway, named Highway \textbf{B}, consists of a smaller number of segments, which is further detailed as follows.
\begin{itemize}
	\item There are $n=7$ states, with $N=5$ highway segments, $N_I=1$ on-ramp, and $N_O=1$ off-ramp connected to the $\nth{2}$ and $\nth{4}$ highway segments respectively.
	\item $p=2$, where $2$ sensors on the $\nth{1}$ and $\nth{5}$ highway segments.
	\item Parameters such as $v_f$, $\rho_m$, and $l$ are the same as those on Highway \textbf{A}.
	\item The exit ratio and traffic flow for the uncongested case are chosen to be $\alpha_{1} = 0.2$ and $\m u(t) = \bmat{0.1&0.05&0.011}^{\top}$ for all $t \in [0, 500] \sec$.
	\item The exit ratio and traffic flow for the congested case are chosen to be $\alpha_{1} = 0.15$ and $\m u(t) = \bmat{0.34&0.13&0.05}^{\top}$ for all $t \in [0, 500] \sec$. 
\end{itemize}
Since EKF and UKF are based on discrete-time model, we discretize \eqref{eq:state_space_uncongested_ramps} using first-order Taylor approximation with sampling time $T = 0.1$ seconds.
The process and measurement noise covariance matrices for both EKF and UKF are chosen as $\m Q = 10^{-8}\mI$ and $\m R = 10^{-8}\mI$, where the initial error covariance is set to be $\m P_{cov,0} = 10^{-6}\mI$. For UKF, the constants to determine sigma points, which is pivotal in unscented transformation \cite{wan2000unscented}, are set to be $\alpha = 0.1$, $\beta = 2$, and $\kappa = -4$. 

We first note that EKF and UKF both fail to produce converging estimation error when model uncertainty is introduced.\footnote{\color{black}We tested EKF and UKF for a variety of parametric uncertainty magnitudes $\kappa$ akin to Section~\ref{sec:numerical-param}. Unfortunately, both estimators failed to converge and hence our choice to only compare with the $\mathcal{L}_{\infty}$ observer without parametric uncertainty.} As a result, the comparative simulation results shown here are only given under process and measurement noise $\m w(t)$ defined in \eqref{eq:dist_signal_1}. 
The plots for the estimation error norm $\norm
{\m e(t)}_2$ for the three estimation methods are given in Fig. \ref{fig:EKF_UKF_comparison_30} for Highway \textbf{A} and Fig. \ref{fig:EKF_UKF_comparison_7} for Highway \textbf{B} (for the congested and uncongested modes). These figures illustrate that the proposed $\mathcal{L}_{\infty}$ observer has converging error norm with the smallest fluctuation and oscillations, whereas the error norms for EKF and UKF are experiencing much bigger fluctuations. To objectively assess the quality of the estimation and the computational effort in generating state estimates,  Tab.~\ref{tab:comparison_result} produces the computational time and the \textit{Root Mean Square Error} (RMSE) and the \textit{Mean Error} (ME) defined as 
\begin{align*}
	\mathrm{RMSE} &= \sum_{i=1}^{n} \sqrt{\frac{1}{t_f}\sum_{t=1}^{t_f}(e_i(t))^2} \\
		\mathrm{ME} &=\mathrm{average}(\norm{\m e(t)}_2), \; \forall t = t_f-100, t_f-99, \ldots, t_f.
\end{align*}
The ME essentially quantifies the quality of the estimation in the final 100 time-steps of the simulation.
We observe that, other than the dramatically faster simulation time for $\mathcal{L}_{\infty}$ observer relative to those of EKF and UKF, $\mathcal{L}_{\infty}$ observer also produces the smallest \textit{Root Mean Square Error} (RMSE) in most cases.This is due to the fact that the observer dynamics are essentially a one-step predictor with low computational complexity, and the design of the gain $\m L$ is computed offline. 
Moreover, at the end of the simulation, i.e. at $t = 500$ seconds, $\mathcal{L}_{\infty}$ observer also returns the smallest error norm. These results qualitatively and quantitatively show the merits of the proposed $\mathcal{L}_{\infty}$ observer over EKF and UKF. }

\begin{figure}[t]
	\vspace{-0.15cm}
	\centering 
	\subfloat[]{\includegraphics[keepaspectratio=true,scale=0.4]{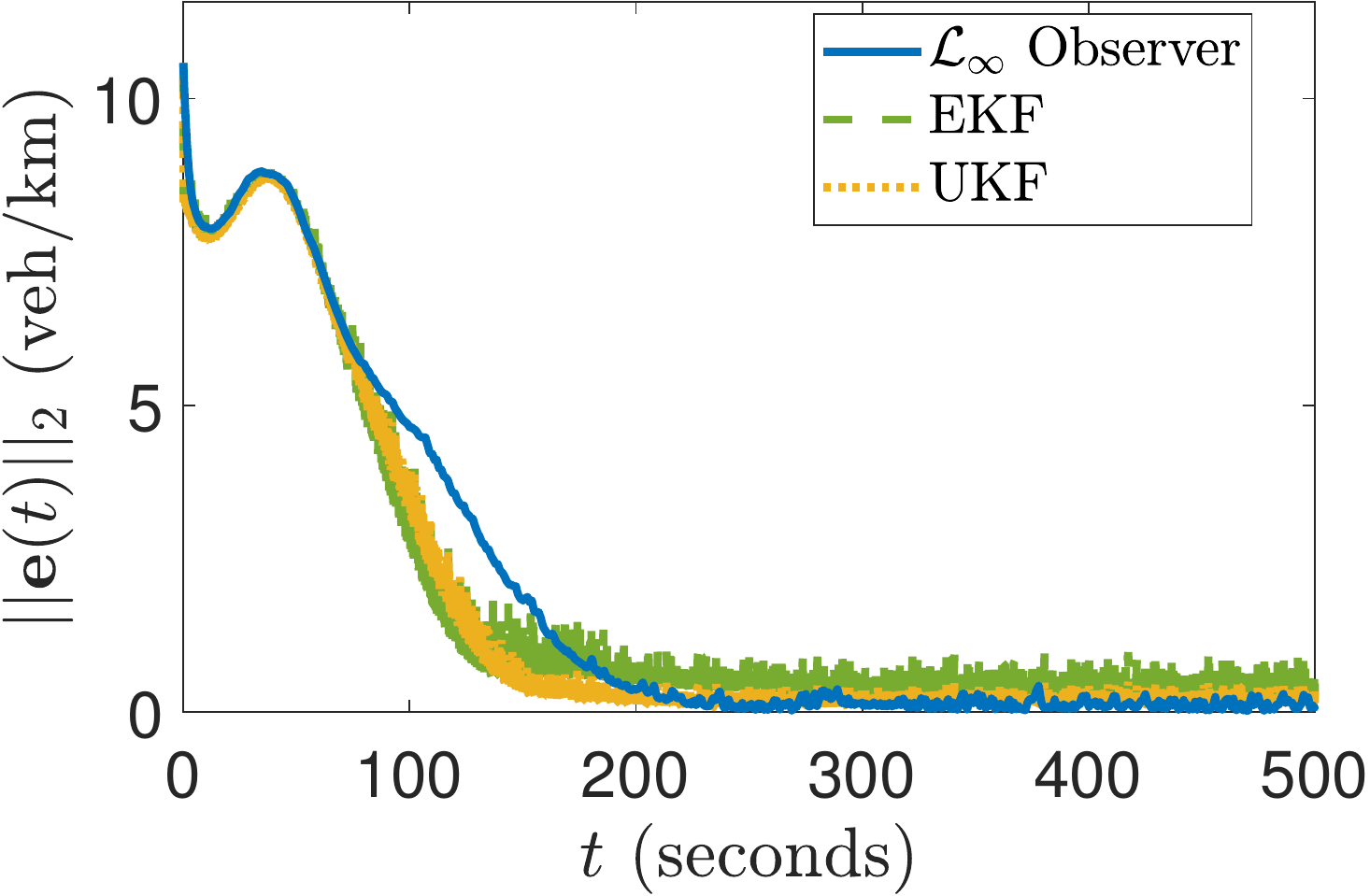}}{}\vspace{-0.3cm}
	\subfloat[]{\includegraphics[keepaspectratio=true,scale=0.4]{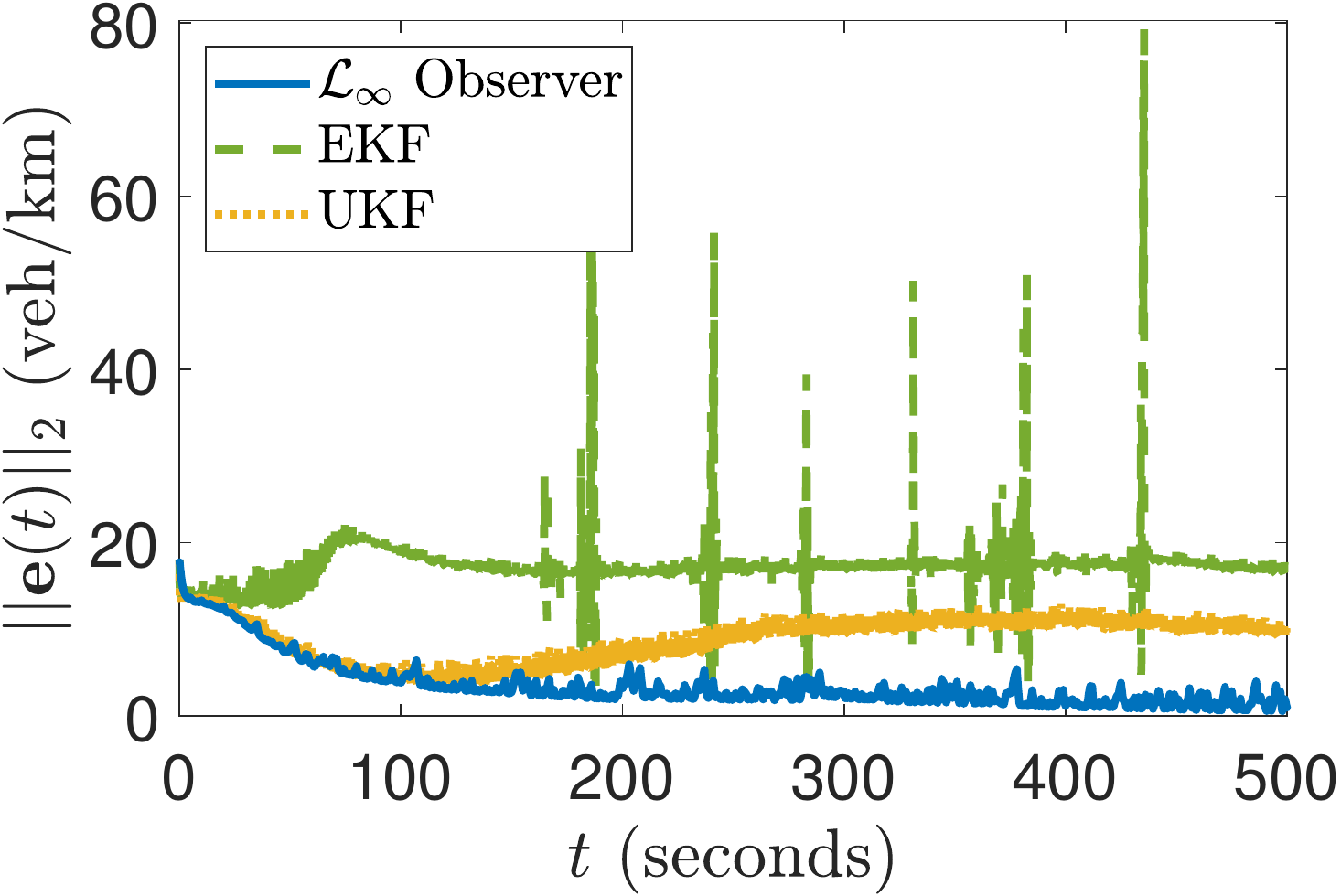}}{}\hspace{-0.0cm}\vspace{-0.12cm}
	\caption{Comparison of estimation error norm between $\mathcal{L}_{\infty}$ observer, EKF, and UKF for Highway \textbf{B} in (a) uncongested case and (b) congested case.}
	\label{fig:EKF_UKF_comparison_7}\vspace{-0.4cm}
\end{figure}

	\begin{table*}[t]
		\footnotesize	\renewcommand{\arraystretch}{1.3}
		\caption{\color{black} Quantitative comparison results between $\mathcal{L}_{\infty}$ observer, EKF, and UKF for both highway systems for the two cases: $\Delta t$ denotes the total computational time; $\mathrm{RMSE}$ quantifies the total error; $\mathrm{ME}$ denotes the mean error norm for $t\in[400,500]\sec$.  }
		\label{tab:comparison_result}
		\centering
		\color{black}	\begin{tabular}{|l|l|l|l|l|l|l|l|l|l|l|l|l|}
			\hline
			\multirow{2}{*}{} & \multicolumn{3}{c|}{Highway \textbf{A}, uncongested} & \multicolumn{3}{c|}{Highway \textbf{A}, congested} & \multicolumn{3}{c|}{Highway \textbf{B}, uncongested} & \multicolumn{3}{c|}{Highway \textbf{B}, congested} \\ \cline{2-13} 
			\textbf{State Estimator}	&   $\Delta t$ (s)    &   $\mathrm{RMSE}$    &  $\mathrm{ME}$     &       $\Delta t$ (s)    &   $\mathrm{RMSE}$    &  $\mathrm{ME}$        &       $\Delta t$ (s)    &   $\mathrm{RMSE}$    &  $\mathrm{ME}$        &       $\Delta t$ (s)    &   $\mathrm{RMSE}$    &  $\mathrm{ME}$     \\ \hline\hline
			$\mathcal{L}_{\infty}$ observer	&   $30.0$   &  $23.72$     &  $1.41$     &  $28.2$     &  $63.03$     &  $4.28$     &  $2.9$     &  $6.88$     &   $0.13$    &   $2.4$    &   $11.35$    &  $1.60$     \\ \hline
			EKF	&   $229.4$    &  $26.84$     &   $2.95$    &  $222.1$     &   $74.68$    &  $7.08$     &     $76.5$     &  $6.35$     &   $0.38$    &   $77.0$    &   $31.47$    &  $17.79$        \\ \hline
			UKF	&  $269.4$     &  $40.37$     &  $7.54$     &  $260.3$     &   $85.49$    &   $15.67$    &     $77.7$      &  $5.93$     &   $0.21$    &   $78.7$    &   $19.60$    &  $10.50$       \\ \hline
		\end{tabular}
	\vspace{-0.2cm}
\end{table*}

\normalcolor

\vspace{-0.25cm}
\subsection{Lipschitz Constant Conservatism and Scalability}
 In this section, and to test for scalability and applicability of the proposed methods to larger systems, we compare the performance of two different solvers, Mosek \cite{andersen2000mosek} and SDPNAL+ \cite{Yang2015}. 
 The comparison is performed by solving the $\mathcal{L}_{\infty}\text{-}\mathrm{Observer}$ from Theorem~\ref{l_inf_theorem} for different highway sizes assuming the uncongested case, ranging from 20 segments of stretched highway to 1000. To obtain a convex problem, we again choose to set the values of $\alpha$ and $\mu_1$ a priori such that $\alpha = 10^{-3}$ and $\mu_1 = 10^4$. The performance matrix for $\mathcal{L}_{\infty}\text{-}\mathrm{Observer}$ is chosen to be $\mZ = 10^{-3}\mI$ with disturbance matrices selected to be $\m {B_{\mathrm{w}}} = \bmat{0.01\m {B_{\mathrm{u}}} & \mO}$ and $\m {D_{\mathrm{w}}}= \bmat{\mO & 0.01\m {C}}$. For simplicity, we impose that all highways only have one on-ramp and one off-ramp, which are respectively connected to the $\nth{2}$ and $(N-1)^{\textrm{th}}$ segments on the stretched highway. All highway segments are assumed to be equipped with sensors except three segments in the middle. Other parameters are similar to those assumed in the previous simulation.

The detailed results of this experiment are given in Tab.~\ref{tab:scalability}. The Lipschitz constant $\gamma_u$ increases as the number of highway segments increases. This is in accordance with Theorem~\ref{proposition1} and Eq. \eqref{eq:lipschitz_const_uncongested}, where the Lipschitz constant $\gamma_u$ is determined by $N$, $N_I$, $N_O$, and $\alpha(\cdot)$. 
We also observe that SDPNAL+ outperforms Mosek in terms of the computational time. SDPNAL+ is able to compute the solution for any highway size within seconds. This is in contrast to Mosek, where the utilized computer can only give results for up to $N = 140$. Additionally, we increase the number of highway segments up to $1000$ for SDPNAL+, and the result is given in Fig. \ref{fig:sdpnal_mosek}, where it only takes less than $200$ seconds for SDPNAL+ to obtain the solution on a personal computer. SDPNAL+ is able to solve the problem with very short computational time as it exploits the sparsity of the traffic model. In fact, SDPNAL+ is designed to deal with SDPs with sparse problem data.  This result is significant as it \textit{(a)} challenges the notion that SDPs do not scale well for dynamic networks of medium to large sizes and \textit{(b)} showcases that the proposed estimator design can be performed whenever a change in the highway mode classification occurs.

\begin{figure}[t]
	\vspace{-0.0cm}	\hspace{0.0cm}\centerline{\includegraphics[keepaspectratio=true,scale=0.4]{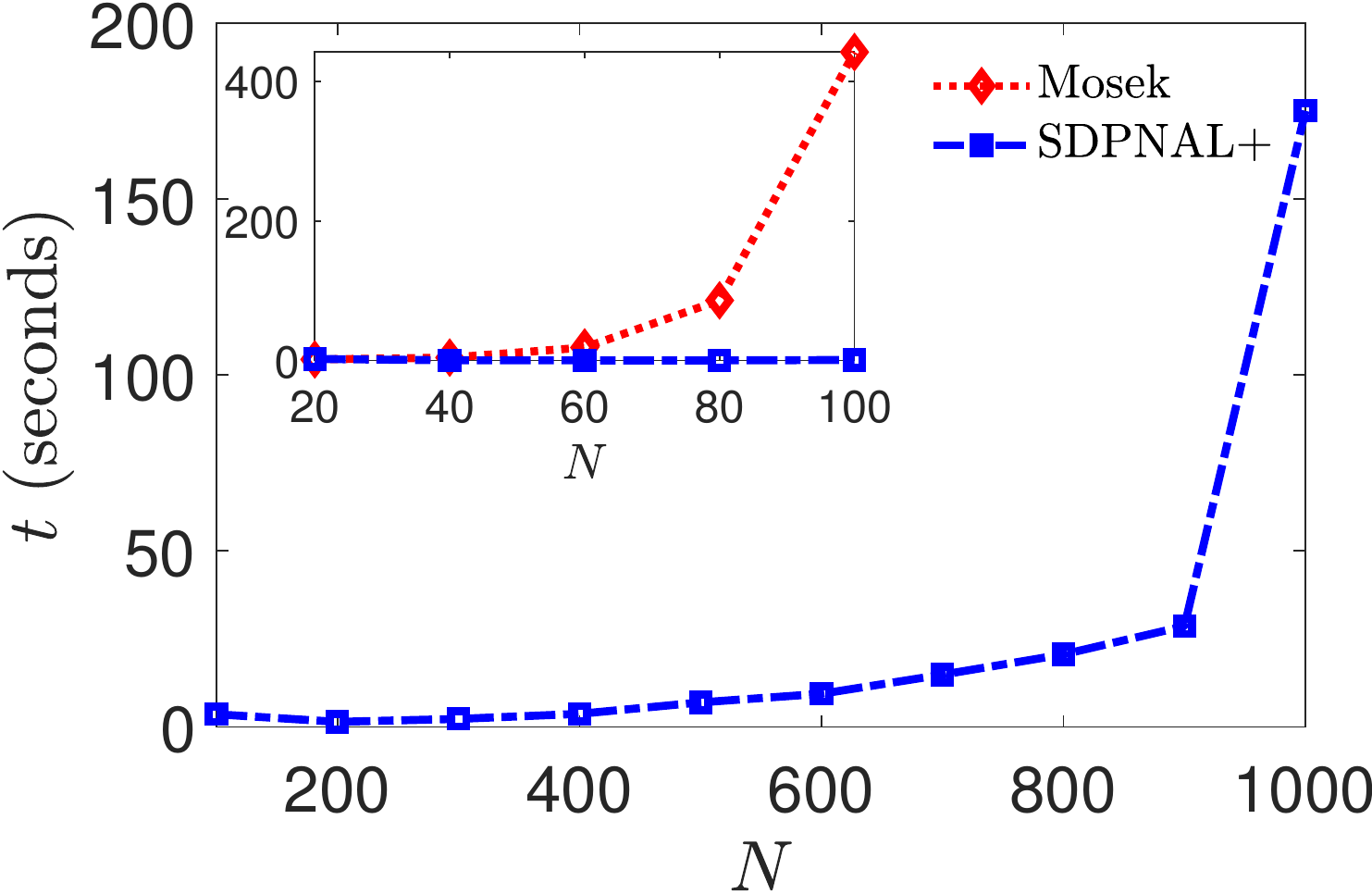}}
		\vspace{-0.5cm}
	\noindent \caption{Computational time of using SDPNAL+ for various highway sizes. The results illustrate that even for a large highway section, the proposed robust estimator can still be used and designed every few minutes.  }
	\label{fig:sdpnal_mosek}
	\vspace{-0.4cm}
\end{figure}

\begin{table}[t!]
	\vspace{-0.2cm}
	\footnotesize	\renewcommand{\arraystretch}{1.3}
	\caption{Lipschitz constant and computational time (for two SDP solvers) for various highway sizes. }
	\label{tab:scalability}
	\centering
	\begin{tabular}{|l|l|l|l|}
		\hline
		${N}$ & ${\gamma_u}$ & {SDPNAL+}  (s)  & {Mosek} (s)\\
		\hline
		\hline
		$20$ & $0.4023$ & $2.2912$ & $1.5930$\\
		\hline
		$40$ & $0.5645$ & $0.3465$ & $4.2073$ \\
		\hline
		$60$ & $0.6895$ & $0.2261$ & $18.6730$ \\
		\hline
		$80$ & $0.7951$ & $0.2147$ & $86.1433$ \\
		\hline
		$100$ & $0.8882$ & $0.8252$ & $441.9469$ \\
		\hline
		$120$ & $0.9724$ & $0.3632$ & $611.3683$ \\
		\hline
		$140$ & $1.0499$  & $0.4390$ & $2279.9862$ \\
		\hline
		$160$ & $1.1221$ & $0.5585$ & $-$ \\
		\hline
		$180$ & $1.1899$ & $0.7072$ & $-$ \\
		\hline
		$200$ & $1.2540$ & $0.8649$ & $-$  \\
		\hline
	\end{tabular}
\vspace{-0.2cm}
\end{table}
\normalcolor

\section{{\color{black}Summary, Limitations, and Future Work}}

{\color{black}
In this paper, we present nonlinear, state-space models for a generalized traffic flow model for stretched highways with arbitrary number of ramp flows based on the \textit{Lighthill Whitham Richards} (LWR) flow model assuming that the stretched highway segments are all either congested or uncongested. We show that the nonlinearities of both traffic models satisfy the locally Lipschitz property and propose analytical methods for computing the corresponding Lipschitz constants. These results are then used to design a real-time and robust observer using the concept of $\mathcal{L}_{\infty}$ stability for stretched highway systems, given a limited number of traffic sensors, under the presence of disturbances, which may include model uncertainty, process noise, and measurement noise. The robust observer, in short, yields an upper bound on the estimation error norm relative to the magnitude to the uncertainty. This upper bound is optimized via convex optimization techniques.

\color{black}
Despite of the above contributions, it is worthwhile to mention that there are several aspects that may potentially restrict the applicability of the proposed framework, such as 
\begin{itemize}
	\item the choice of employing Greenshield's model to represent the fundamental diagram. It is known that in Greenshield's fundamental diagram, the relation between traffic flow and traffic density is somewhat different than the empirical data.
	\item considering a simple stretched highways with ramps instead of the more realistic network of highways or arterial roads. Moreover, it is also assumed that the traffic modes are known, which is another limitation of this work.
	\item using time-invariant state-space equations to model the evolution of traffic density. Realize that in most real situations, some of the parameters on the model are time-varying, such as free-flow speed, maximum density, critical density, exit ratio, and traffic flows.
\end{itemize}
We plan to address the above limitations in our future work by \textit{(i)} building on discrete-time models for traffic density via CTM using a more realistic fundamental diagram and then designing a robust observer for these models, \textit{(ii)} considering the traffic state estimation problem for networks of arterial roads, \textit{(iii)} taking into account the time-varying nature of traffic parameters including the switching behavior. 
We also plan to investigate solutions to the joint problems of \textit{(P1)} placement or selection of static (loop detectors) and dynamic (in-stream sensors from moving vehicles such as GPS data) sensors; and \textit{(P2)} robust state estimation. Problems \textit{(P1-P2)} are coupled, and the outcome of solving them jointly would yield a minimal sensor placement that yields desirable estimation error bounds. 
Finally, the problem of traffic state estimation based on mixed \textit{Eulerian} sensors, such as loop detectors, and \textit{Lagrangian} sensors, such as {GPS} data from moving vehicles, are also of interest for future work---considering that the observability analysis for such problem has been recently studied \cite{contreras2018quality}.\normalcolor
} 


\section*{Acknowledgments}

We gratefully acknowledge the constructive comments from the editor and the reviewers. We also acknowledge the financial support from the National Science Foundation through Grants 1636154, 1728629, 1917164, 1917056 and Valero Energy Corporation through Valero PhD Competitive Research Scholarship Awards.

\bibliographystyle{IEEEtran}	\bibliography{IEEEabrv,bibl}

\onecolumn

\appendix

\subsection{State-Space Parameters of Traffic Density Dynamic Model for the Congested Case}\label{apdx:table_congested}
The state-space parameters of traffic density model for the uncongested and congested cases are given in Tabs.~\ref{tab:dynamics_parameter_uncongested} and \ref{tab:dynamics_parameter_congested}.

\begin{table}[h]
	\footnotesize \centering 
	\caption{Detailed parameters for traffic density dynamic model for the uncongested case}
	\label{tab:dynamics_parameter_uncongested}
	\renewcommand{\arraystretch}{1.5}
	\begin{tabular}{|l|l|}
		\hline 
		\textbf{ Parameter} & \textbf{Description } \\[0.5ex] 
		\hline 
		\hline
		$\begin{array}{c}
		\hspace{-0.0cm}\m A_1 \in \mathbb{R}^{N\times N},\; \\ \m A_2 \in \mathbb{R}^{N\times (N_I+N_O)}, \; \\
		\m A_3 \in \mathbb{R}^{(N_I+N_O)\times(N_I+N_O)}
		\end{array}$ & $\begin{array} {ccc}
		\m A_1  = \bmat{-\frac{v_f}{l}&0&0&\cdots&0\\
			\frac{v_f}{l}&-\frac{v_f}{l}&0&\cdots&0\\
			0&\frac{v_f}{l}&-\frac{v_f}{l}&\cdots&0\\
			\vdots&\vdots&\vdots&\ddots&\vdots\\
			0&0&0&\cdots&-\frac{v_f}{l}} \label{eq:uncongested_dynamic_matrix_1-a}  \end{array} \begin{array} {lcr}
		\m A_2(i,j) = \begin{cases}
		\vphantom{\left(\frac{v_f}{l}\right)}\frac{v_f}{l}, & \text{if } i\in\mathbfcal{E}_I,\;j\in\hat{\mathbfcal{E}}\\
		-\frac{\alpha(\bar{j})v_f}{l}, & \text{if } i\in\mathbfcal{E}_O,\;j = N_I + \bar{j},\;\bar{j}\in\check{\mathbfcal{E}}\\
		0,  & \text{otherwise}
		\end{cases} \label{eq:uncongested_dynamic_matrix_1-b} 
		\\
		\\
		\m A_3(i,j) = \begin{cases}
		\vphantom{\left(\frac{v_f}{l}\right)}-\frac{v_f}{l}, & \text{if } i = j, i\in\hat{\mathbfcal{E}}\\
		\frac{\alpha(\bar{i})v_f}{l}, & \text{if } i = j, \;i = N_I + \bar{i}, \bar{i}\in\check{\mathbfcal{E}}\\
		0,  & \text{otherwise}.
		\end{cases} 
		\end{array}$\\ 
		\hline		
		$\begin{array}{l}
		\m f:\mathbb{R}^n\rightarrow\mathbb{R}^n 
		\end{array}$ & $\begin{array} {lcr}
		f_i(\m x) = \begin{cases}
		\vphantom{\left(\frac{v_f}{l}\right)}\delta  x_i^2\label{eq:uncongested_dynamic_matrix_1-d}, & \text{if } i\in\mathbfcal{E}\setminus {\mathbfcal{E}}_I \cup {\mathbfcal{E}}_O,\;i = 1\\
		\delta  \left( x_i^2-x_{i-1}^2\right) \label{eq:uncongested_dynamic_matrix_1-e}, & \text{if } i\in\mathbfcal{E}\setminus {\mathbfcal{E}}_I \cup {\mathbfcal{E}}_O$, $i \neq 1\\
		\delta  \left( x_i^2-x_{i-1}^2-x_j^2\right)\label{eq:uncongested_dynamic_matrix_1-f}, & \text{if }  i\in\mathbfcal{E}_I\setminus{\mathbfcal{E}}_I \cap {\mathbfcal{E}}_O,\;j = N+\bar{j},\; \bar{j}\in\hat{\mathbfcal{E}} \\
		\delta  \left( x_i^2-x_{i-1}^2+\alpha(\bar{j})x_j^2\right) \label{eq:uncongested_dynamic_matrix_1-g} , & \text{if } i\in\mathbfcal{E}_O\setminus{\mathbfcal{E}}_I \cap {\mathbfcal{E}}_O,\;j = N+N_I+\bar{j},\; \bar{j}\in\check{\mathbfcal{E}} \\
		\delta  \left( x_i^2-x_{i-1}^2-x_j^2+\alpha(\bar{k})x_k^2\right) \label{eq:uncongested_dynamic_matrix_1-h}, & \text{if } i\in{\mathbfcal{E}}_I \cap {\mathbfcal{E}}_O,\;j = N+\bar{j},\; \bar{j}\in\hat{\mathbfcal{E}},\;k = N+N_I+\bar{k},\; \bar{k}\in\check{\mathbfcal{E}} \\
		\delta  x_i^2 \label{eq:uncongested_dynamic_matrix_1-i}, & \text{if }i = N + \bar{i},\;\bar{i}\in\hat{\mathbfcal{E}}\\
		-\alpha(\bar{i})\delta  x_i^2 \label{eq:uncongested_dynamic_matrix_1-j}	, & \text{if }	i = N + N_I + \bar{i},\;\bar{i}\in\check{\mathbfcal{E}}
		\end{cases}  
		\end{array}$ \\ 
		\hline		
		$\begin{array}{c}
		\m {B_{\mathrm{u}}} \in \mathbb{R}^{n\times(1+N_I+N_O)}
		\end{array}$ & $\begin{array} {ccc}
		\m {B_{\mathrm{u}}}(i,j) = \begin{cases}
		\frac{1}{l}, & \text{if } i = j = 1, \;i\in{\mathbfcal{E}}\\
		\frac{1}{l}, & \text{if } i = N+\bar{i},\;j = 1+\bar{i},\;\bar{i}\in\hat{\mathbfcal{E}}\\
		-\frac{1}{l}, & \text{if } i = N+N_I+\bar{i},\;j = 1+N_I+\bar{i},\; \bar{i} \in\check{\mathbfcal{E}}\\
		0,  & \text{otherwise} 
		\end{cases}   \label{eq:uncongested_dynamic_matrix_1-k}
		\end{array}$\\ 
		\hline	
		$\begin{array}{c}
		\m u \in \mathbb{R}^{1+N_I+N_O}
		\end{array}$ & $\begin{array} {ccc}
		\m u(t)= \bmat{f_{\mathrm{in}} & \hat{f}_1 & \hat{f}_2 & \cdots & \hat{f}_{N_I} & \check{f}_1 & \check{f}_2 & \cdots & \check{f}_{N_O}}^{\top} \label{eq:uncongested_dynamic_matrix_1-l}
		\end{array}$\\ 
		\hline	
	\end{tabular}
\end{table}

\begin{table}[h]
	\footnotesize \centering 
	\caption{Detailed parameters for traffic density dynamic model for the congested case}
	\label{tab:dynamics_parameter_congested}
	\renewcommand{\arraystretch}{1.5}
	\begin{tabular}{|l|l|}
		\hline 
		\textbf{ Parameter} & \textbf{Description } \\[0.5ex] 
		\hline 
		\hline
		$\begin{array}{c}
		\hspace{-0.0cm}\m A_1 \in \mathbb{R}^{N\times N}
		\end{array}$ & $\begin{array} {ccc}
		\m A_1 &= \bmat{\frac{v_f}{l}&-\frac{v_f}{l}&0&\cdots&0\\
			0&\frac{v_f}{l}&-\frac{v_f}{l}&\cdots&0\\
			0&0&\frac{v_f}{l}&\cdots&0\\
			\vdots&\vdots&\vdots&\ddots&\vdots\\
			0&0&0&\cdots&\frac{v_f}{l}} \label{eq:congested_dynamic_2a} \end{array}$\\ 
		\hline	
		$\begin{array}{l}
		\m f:\mathbb{R}^n\rightarrow\mathbb{R}^n 
		\end{array}$ & $\begin{array} {lcr}
		f_i(\m x) = \begin{cases}
		\vphantom{\left(\frac{v_f}{l}\right)}-\delta  x_i^2\label{eq:congested_dynamic_2b}, & \text{if } i\in\mathbfcal{E}\setminus {\mathbfcal{E}}_I \cup {\mathbfcal{E}}_O,\;i = N\\
		\delta  \left( x_{i+1}^2-x_{i}^2\right)\label{eq:congested_dynamic_2c}, & \text{if } i\in\mathbfcal{E}\setminus {\mathbfcal{E}}_I \cup {\mathbfcal{E}}_O$, $i \neq N\\
		\delta  \left( x_{i+1}^2-x_{i}^2-x_j^2\right)\label{eq:congested_dynamic_2d}, & \text{if }  i\in\mathbfcal{E}_I\setminus{\mathbfcal{E}}_I \cap {\mathbfcal{E}}_O,\;j = N+\bar{j},\; \bar{j}\in\hat{\mathbfcal{E}} \\
		\delta  \left( x_{i+1}^2-x_{i}^2+\alpha(\bar{j})x_j^2\right)\label{eq:congested_dynamic_2e} , & \text{if } i\in\mathbfcal{E}_O\setminus{\mathbfcal{E}}_I \cap {\mathbfcal{E}}_O,\;j = N+N_I+\bar{j},\; \bar{j}\in\check{\mathbfcal{E}} \\
		\delta  \left( x_{i+1}^2-x_{i}^2-x_j^2+\alpha(\bar{k})x_k^2\right)\label{eq:congested_dynamic_2f}, & \text{if } i\in{\mathbfcal{E}}_I \cap {\mathbfcal{E}}_O,\;j = N+\bar{j},\; \bar{j}\in\hat{\mathbfcal{E}},\;k = N+N_I+\bar{k},\; \bar{k}\in\check{\mathbfcal{E}} \\
		\delta  x_i^2 \label{eq:uncongested_dynamic_matrix_1-i}, & \text{if }i = N + \bar{i},\;\bar{i}\in\hat{\mathbfcal{E}}\\
		-\alpha(\bar{i})\delta  x_i^2 \label{eq:uncongested_dynamic_matrix_1-j}	, & \text{if }	i = N + N_I + \bar{i},\;\bar{i}\in\check{\mathbfcal{E}}
		\end{cases}  
		\end{array}$ \\ 
		\hline		
		$\begin{array}{c}
		\m {B_{\mathrm{u}}} \in \mathbb{R}^{n\times(1+N_I+N_O)}
		\end{array}$ & $\begin{array} {ccc}
		\m {B_{\mathrm{u}}}(i,j) = \begin{cases}
		-\frac{1}{l}, & \text{if } i = N,\;j = 1,\;i\in{\mathbfcal{E}}\\
		\frac{1}{l}, & \text{if } i = N+\bar{i},\;j = 1+\bar{i},\;\bar{i}\in\hat{\mathbfcal{E}}\\
		-\frac{1}{l}, & \text{if } i = N+N_I+\bar{i},\;j = 1+N_I+\bar{i},\;\bar{i} \in\check{\mathbfcal{E}}\\
		0,  & \text{otherwise}.
		\end{cases}   \label{eq:congested_dynamic_2i}
		\end{array}$\\ 
		\hline	
		$\begin{array}{c}
		\m u \in \mathbb{R}^{1+N_I+N_O}
		\end{array}$ & $\begin{array} {ccc}
		\m u(t)= \bmat{f_{\mathrm{out}} & \hat{f}_1 & \hat{f}_2 & \cdots & \hat{f}_{N_I} & \check{f}_1 & \check{f}_2 & \cdots & \check{f}_{N_O}}^{\top} \label{eq:uncongested_dynamic_matrix_2j}
		\end{array}$\\ 
		\hline	
	\end{tabular}
	\vspace{-0.2cm}
\end{table}
\twocolumn

\subsection{Proof of Proposition \ref{proposition1}}\label{apdx:proposition1_proof}
{\color{black}
\begin{proof}
	Let $\m f:\mathbb{R}^n\rightarrow\mathbb{R}^n$ be such function. By using the fact that $v_f,\rho_m,\rho_c,l > 0$ and $\rho_c = \tfrac{1}{2}\rho_m$, then for each case of $f_i(\cdot)$ specified in Tab.~\ref{tab:dynamics_parameter_uncongested} and any $\m x, \hat{\m x} \in \mathbfcal{X}_{\m{\mathrm{u}}}$ we have
	\begin{subequations}
		\begin{enumerate}[label=$\alph*$)]
			\item $i\in\mathbfcal{E}\setminus {\mathbfcal{E}}_I \cup {\mathbfcal{E}}_O,\;i = 1$
			\begin{align*}
			\abs{ f^a_i(\m x)-f^a_i(\hat{\m x}) }
			&\leq \frac{v_f}{l}\abs{ x_i-\hat{x}_i}.
			\end{align*} 
			Since
			$\abs{ x_i-\hat{x}_i}^2 \leq \sum_{j = 1}^{n}\abs{ x_j-\hat{x}_j}^2 = \norm{\m x-\m \hat{\m x}}_2^2$, then
			\begin{align}
			\abs{ f^a_i(\m x)-f^a_i(\hat{\m x}) } 
			&\leq \frac{v_f}{l}\norm{\m x-\m \hat{\m x}}_2. \label{eq:thrm1_1a}
			\end{align} 
			
			\item $i\in\mathbfcal{E}\setminus {\mathbfcal{E}}_I \cup {\mathbfcal{E}}_O$, $i \neq 1$ 
			\begin{align*}
			\abs{ f^b_i(\m x)-f^b_i(\hat{\m x}) } 
			&\leq \frac{v_f}{l}\left(\abs{ x_i-\hat{x}_i}+\,\abs{ x_{i-1}-\hat{x}_{i-1}}\right).
			\end{align*} 
			Since 
			$\sum^{k = i}_{k = i-1}\abs{x_{k}-\hat{x}_{k}}\leq \sqrt{2} \norm{\m x - \hat{\m x} }_2$
			for any $1 < i \leq N$, then 
			\begin{align}
			\abs{ f^b_i(\m x)-f^b_i(\hat{\m x}) } 
			&\leq \frac{\sqrt{2} v_f}{l}\norm{\m x-\m \hat{\m x}}_2.\label{eq:thrm1_1b}
			\end{align}
			
			\item $i\in\mathbfcal{E}_I\setminus{\mathbfcal{E}}_I \cap {\mathbfcal{E}}_O,\;j = N+\bar{j},\; \bar{j}\in\hat{\mathbfcal{E}}$
			\begin{align*}
			\abs{ f^c_i(\m x)-f^c_i(\hat{\m x}) } 
			\leq &\frac{v_f}{l}(\abs{ x_i-\hat{x}_i}+\abs{ x_{i-1}-\hat{x}_{i-1}})\\ &+\frac{2v_f}{l}\abs{ x_{j}-\hat{x}_{j}}.
			\end{align*} 
Since $\sum^{k = i}_{k = i-1}\abs{x_{k}-\hat{x}_{k}}\leq \sqrt{2} \norm{\m x - \hat{\m x} }_2$ 
			for any $1 < i \leq N$, then 
			\begin{align}
			\abs{ f^c_i(\m x)-f^c_i(\hat{\m x}) } 
			&\leq \frac{(2+\sqrt{2})v_f}{l}\norm{\m x-\m \hat{\m x}}_2.\label{eq:thrm1_1c}
			\end{align}
			
			\item $i\in\mathbfcal{E}_O\setminus{\mathbfcal{E}}_I \cap {\mathbfcal{E}}_O,\;j = N+N_I+\bar{j},\; \bar{j}\in\check{\mathbfcal{E}}$
			\begin{align*}
			\abs{ f^d_i(\m x)-f^d_i(\hat{\m x}) } 
			\leq &\frac{v_f}{l}(\abs{ x_i-\hat{x}_i}+\abs{ x_{i-1}-\hat{x}_{i-1}}) \\&+\frac{2v_f}{l}\alpha(\bar{j})\abs{ x_{j}-\hat{x}_{j}}.
			\end{align*} 
			Since 
			$\sum^{k = i}_{k = i-1}\abs{x_{k}-\hat{x}_{k}}\leq \sqrt{2} \norm{\m x - \hat{\m x} }_2$
			for any $1 < i \leq N$, then 
			\begin{align}
			\abs{ f^d_i(\m x)-f^d_i(\hat{\m x}) } 
			&\leq \frac{(\sqrt{2}+2\alpha(\bar{j}))v_f}{l}\norm{\m x-\m \hat{\m x}}_2.\label{eq:thrm1_1d}
			\end{align}
			
			\item $i\in{\mathbfcal{E}}_I \cap {\mathbfcal{E}}_O,\;j = N+\bar{j},\; \bar{j}\in\hat{\mathbfcal{E}},\;k = N+N_I+\bar{k},\; \bar{k}\in\check{\mathbfcal{E}}$
			\begin{align*}
			\abs{ f^e_i(\m x)-f^e_i(\hat{\m x}) } 
			\leq &\frac{v_f}{l}(\abs{ x_i-\hat{x}_i}+\abs{ x_{i-1}-\hat{x}_{i-1}}) \\ &+\frac{2v_f}{l}(\abs{ x_{j}-\hat{x}_{j}}+\alpha(\bar{k})\abs{ x_{k}-\hat{x}_{k}}). 
			\end{align*} 
			Since $\sum^{l = i}_{l = i-1}\abs{x_{l}-\hat{x}_{l}}\leq \sqrt{2} \norm{\m x - \hat{\m x} }_2$
			for any $1 < i \leq N$, then 
			\begin{align}
			\hspace{-0.4cm}\abs{ f^e_i(\m x)-f^e_i(\hat{\m x}) } 
			&\leq \frac{(2+\sqrt{2}+2\alpha(\bar{j}))v_f}{l}\norm{\m x-\m \hat{\m x}}_2.\label{eq:thrm1_1e}
			\end{align}
			
			\item $i = N + \bar{i},\;\bar{i}\in\hat{\mathbfcal{E}}$
			\begin{align}
			\hspace{-0.4cm}|{ f^f_i(\m x)-f^f_i(\hat{\m x}) }|
			&\leq \frac{2v_f}{l}\abs{ x_i-\hat{x}_i} \leq \frac{2v_f}{l}\norm{\m x-\m \hat{\m x}}_2.\label{eq:thrm1_1f}
			\end{align}
			
			\item $i = N + N_I + \bar{i},\;\bar{i}\in\check{\mathbfcal{E}}$
			\begin{align}
				\hspace{-0.4cm}\abs{ f^g_i(\m x)-f^g_i(\hat{\m x}) }
			&\leq \frac{2\alpha(\bar{i})v_f}{l}\abs{ x_i-\hat{x}_i} \leq \frac{2\alpha(\bar{i})v_f}{l}\norm{\m x-\m \hat{\m x}}_2.\label{eq:thrm1_1g}
			\end{align}
		\end{enumerate}
	\end{subequations}
	From \cref{eq:thrm1_1a,eq:thrm1_1b,eq:thrm1_1c,eq:thrm1_1d,eq:thrm1_1e,eq:thrm1_1f,eq:thrm1_1g}, we know that for any function $f^z_i(\cdot)$ where $z\in \lbrace a,b,\hdots,g\rbrace$, there exists $\gamma_i\geq 0$ such that $\abs{f_i(\m x)- f_i(\hat{\m x})}  \leq \gamma_i \norm{\m x - \hat{\m x} }_2$. Since it holds that 
	\begin{align}
	\norm{\m f(\m x)-\m f(\hat{\m x})}_2^2
	&= \sum_{i=1}^{n}\abs{f_i(\m x)- f_i(\hat{\m x})}^2 \leq \sum_{i=1}^{n}\gamma_i^2 \norm{\m x - \hat{\m x} }_2^2,\label{eq:lemma}
	\end{align}
	then $\m f(\cdot)$ is locally Lipschitz in $\mathbfcal{X}_{\m{\mathrm{u}}}$ with Lipschitz constant given in \eqref{eq:lipschitz_const_uncongested}.
\end{proof}	
}

\subsection{Proof of Proposition \ref{proposition2}}\label{apdx:proposition2_proof}
{\color{black}
	\begin{proof}
		Let $\m f:\mathbb{R}^n\rightarrow\mathbb{R}^n$ be such function. By using the fact that $v_f,\rho_m,\rho_c,l > 0$ and $\rho_c = \tfrac{1}{2}\rho_m$, then for each case of $f_i(\cdot)$ specified in Tab.~\ref{tab:dynamics_parameter_congested} and any $\m x, \hat{\m x} \in \mathbfcal{X}_{\m{\mathrm{c}}}$ we have
		\begin{subequations}
			\begin{enumerate}[label=$\alph*$)]
				\item $i\in\mathbfcal{E}\setminus {\mathbfcal{E}}_I \cup {\mathbfcal{E}}_O,\;i = N$
				\begin{align*}
				\abs{ f^a_i(\m x)-f^a_i(\hat{\m x}) }
				&\leq \frac{2v_f}{l}\abs{ x_i-\hat{x}_i}.
				\end{align*} 
				Since
				$\abs{ x_i-\hat{x}_i}^2 \leq \sum_{j = 1}^{n}\abs{ x_j-\hat{x}_j}^2 = \norm{\m x-\m \hat{\m x}}_2^2$, then
				\begin{align}
				\abs{ f^a_i(\m x)-f^a_i(\hat{\m x}) } 
				&\leq \frac{2v_f}{l}\norm{\m x-\m \hat{\m x}}_2. \label{eq:thrm1_2a}
				\end{align} 
				
				\item $i\in\mathbfcal{E}\setminus {\mathbfcal{E}}_I \cup {\mathbfcal{E}}_O$, $i \neq N$ 
				\begin{align*}
				\abs{ f^b_i(\m x)-f^b_i(\hat{\m x}) } 
				&\leq \frac{2v_f}{l}\left(\abs{ x_{i+1}-\hat{x}_{i+1}}+\,\abs{ x_{i}-\hat{x}_{i}}\right).
				\end{align*} 
				Since 
				$\sum^{k = i+1}_{k = i}\abs{x_{k}-\hat{x}_{k}}\leq \sqrt{2} \norm{\m x - \hat{\m x} }_2$
				for any $1 \leq i < N$, then 
				\begin{align}
				\abs{ f^b_i(\m x)-f^b_i(\hat{\m x}) } 
				&\leq \frac{2\sqrt{2} v_f}{l}\norm{\m x-\m \hat{\m x}}_2.\label{eq:thrm1_2b}
				\end{align}
				
				\item $i\in\mathbfcal{E}_I\setminus{\mathbfcal{E}}_I \cap {\mathbfcal{E}}_O,\;j = N+\bar{j},\; \bar{j}\in\hat{\mathbfcal{E}}$
				\begin{align*}
				\abs{ f^c_i(\m x)-f^c_i(\hat{\m x}) } 
				\leq &\frac{2v_f}{l}(\abs{ x_{i+1}-\hat{x}_{i+1}}+\,\abs{ x_{i}-\hat{x}_{i}}\\ &+\abs{ x_{j}-\hat{x}_{j}}).
				\end{align*} 
				Since 
				$\sum^{k = i+1}_{k = i}\abs{x_{k}-\hat{x}_{k}} + \abs{x_{j}-\hat{x}_{j}} \leq 2 \norm{\m x - \hat{\m x} }_2$
				for any $1 \leq i < N$, then 
				\begin{align}
				\abs{ f^c_i(\m x)-f^c_i(\hat{\m x}) } 
				&\leq \frac{4v_f}{l}\norm{\m x-\m \hat{\m x}}_2.\label{eq:thrm1_2c}
				\end{align}
				
				\item $i\in\mathbfcal{E}_O\setminus{\mathbfcal{E}}_I \cap {\mathbfcal{E}}_O,\;j = N+N_I+\bar{j},\; \bar{j}\in\check{\mathbfcal{E}}$
				\begin{align*}
				\abs{ f^d_i(\m x)-f^d_i(\hat{\m x}) } 
				\leq &\frac{2v_f}{l}(\abs{ x_i-\hat{x}_i}+\abs{ x_{i-1}-\hat{x}_{i-1}} \\&+ \alpha(\bar{j})\abs{ x_{j}-\hat{x}_{j}}).
				\end{align*} 
				Since 
				\hspace{-0.25cm}
				$\sum^{k = i+1}_{k = i}\abs{x_{k}-\hat{x}_{k}}\leq \sqrt{2} \norm{\m x - \hat{\m x} }_2$
				for any $1 \leq i < N$, then 
				\begin{align}
				\abs{ f^d_i(\m x)-f^d_i(\hat{\m x}) } 
				&\leq \frac{2(\sqrt{2}+\alpha(\bar{j}))v_f}{l}\norm{\m x-\m \hat{\m x}}_2.\label{eq:thrm1_2d}
				\end{align}
				
				\item $i\in{\mathbfcal{E}}_I \cap {\mathbfcal{E}}_O,\;j = N+\bar{j},\; \bar{j}\in\hat{\mathbfcal{E}},\;k = N+N_I+\bar{k},\; \bar{k}\in\check{\mathbfcal{E}}$
				\begin{align*}
				\abs{ f^e_i(\m x)-f^e_i(\hat{\m x}) } 
				\leq &\frac{2v_f}{l}(\abs{ x_i-\hat{x}_i}+\abs{ x_{i-1}-\hat{x}_{i-1}} \\ &+\abs{ x_{j}-\hat{x}_{j}})+\frac{2v_f}{l}\alpha(\bar{k})\abs{ x_{k}-\hat{x}_{k}}. 
				\end{align*} 
				Since $\sum^{l = i+1}_{l = i}\abs{x_{l}-\hat{x}_{l}} + \abs{x_{j}-\hat{x}_{j}} \leq 2 \norm{\m x - \hat{\m x} }_2$
				for any $1 \leq i < N$, then 
				\begin{align}
				\hspace{-0.4cm}\abs{ f^e_i(\m x)-f^e_i(\hat{\m x}) } 
				&\leq \frac{2(2+\alpha(\bar{j}))v_f}{l}\norm{\m x-\m \hat{\m x}}_2.\label{eq:thrm1_2e}
				\end{align}
				
				\item $i = N + \bar{i},\;\bar{i}\in\hat{\mathbfcal{E}}$
				\begin{align}
				\hspace{-0.4cm}|{ f^f_i(\m x)-f^f_i(\hat{\m x}) }|
				&\leq \frac{2v_f}{l}\abs{ x_i-\hat{x}_i} \leq \frac{2v_f}{l}\norm{\m x-\m \hat{\m x}}_2.\label{eq:thrm1_2f}
				\end{align}
				
				\item $i = N + N_I + \bar{i},\;\bar{i}\in\check{\mathbfcal{E}}$
				\begin{align}
				\hspace{-0.4cm}\abs{ f^g_i(\m x)-f^g_i(\hat{\m x}) }
				&\leq \frac{2\alpha(\bar{i})v_f}{l}\abs{ x_i-\hat{x}_i} \leq \frac{2\alpha(\bar{i})v_f}{l}\norm{\m x-\m \hat{\m x}}_2.\label{eq:thrm1_2g}
				\end{align}
			\end{enumerate}
		\end{subequations}
		From \cref{eq:thrm1_2a,eq:thrm1_2b,eq:thrm1_2c,eq:thrm1_2d,eq:thrm1_2e,eq:thrm1_2f,eq:thrm1_2g}, we know that for any function $f^z_i(\cdot)$ where $z\in \lbrace a,b,\hdots,g\rbrace$, there exists $\gamma_i\geq 0$ such that $\abs{f_i(\m x)- f_i(\hat{\m x})}  \leq \gamma_i \norm{\m x - \hat{\m x} }_2$. By using \eqref{eq:lemma}, it follows that
		$\m f(\cdot)$ is locally Lipschitz in $\mathbfcal{X}_{\m{\mathrm{c}}}$ with Lipschitz constant given in \eqref{eq:lipschitz_const_congested}.
	\end{proof}	
}

\subsection{Proof of Theorem \ref{l_inf_theorem}}\label{apdx:thrm2_proof}
{\color{black}
\begin{proof}
	Let $\m z = \m Z\m e$ be the performance output for estimation error $\m e$ and $\m w$ be an unknown bounded disturbance. Construct $V(\m e) = \m e^{\top}\mP \m e$ as the Lyapunov function candidate  where $\mP \succ 0$.  
	From \cite[Theorem 1]{pancake2002analysis}, it can be shown that the estimation error dynamics \eqref{eq:est_error_dynamics} is $\mathcal{L}_{\infty}$ stable with performance level $\mu = \sqrt{\mu_0\mu_1+\mu_2}$ if there exist 
	constants $\mu_0,\mu_1,\mu_2\in \mathbb{R}_{+}$ such that 
	\begin{subequations}\label{eq:l_inf_lemma}
		\vspace{-0.5cm}
		\begin{align}
		\mu_0 \norm{\m w}_2^2 &< V(\m e) \;\Rightarrow \dot{V}(\m e) < 0 \label{eq:l_inf_lemma_1}\\
		\norm{\m z}_2^2 &\leq \mu_1V(\m e)+\mu_2\norm{\m w}_2^2,\label{eq:l_inf_lemma_2}
		\end{align}
	\end{subequations}
		for all $t\geq 0$.
	Note that \eqref{eq:l_inf_lemma_1} holds if there exists $\alpha > 0$ such that $\dot{V}(\m e) +\alpha\left(V(\m e)-\mu_0 \norm{\m w}_2^2\right)\leq 0$. From here, we obtain
	\begin{align}
		\dot{V}(\m e) +\alpha\left(V(\m e)-\mu_0 \norm{\m w}_2^2\right)&\leq 0 \nonumber \\
		\Leftrightarrow \dot{\m e}^{\top}\mP\m e + {\m e}^{\top}\mP\dot{\m e} + \alpha \m e^{\top}\mP\m e -\alpha\mu_0\m w^{\top}\m w &\leq 0, \nonumber \\
		\Leftrightarrow 
		\bmat{\m \Omega &*&*\\
			\mP&\mO&*\\\m {B_{\mathrm{w}}}^{\top}\mP-\m {D_{\mathrm{w}}}^{\top}\mL^{\top}\mP & \mO &-\alpha\mu_0\mI}&\preceq 0, \label{eq:l_inf_theorem_proof_1} 
	\end{align}
	where $\m \Omega \triangleq \mA^{\top}\mP + \mP\mA -\mC^{\top}\mL^{\top}\mP-\mP\mL\mC +\alpha\mP$. Since the function $\m f(\cdot)$ is locally Lipschitz, then we also have
	\begin{align}
		\norm{\Delta\m f}_2^2 &\leq \gamma^2 \norm{\m e}_2^2 \nonumber \\
		\Leftrightarrow \Delta\m f^{\top}\Delta\m f-\gamma^2\m e^{\top} \m e &\leq 0 \nonumber \\
		\Leftrightarrow \bmat{-\gamma^2\mI &*&*\\
			\mO&\mI&*\\\mO&\mO&\mO} &\preceq 0.\label{eq:l_inf_theorem_proof_2}
	\end{align}
	Applying the S-procedure Lemma to \eqref{eq:l_inf_theorem_proof_1} from \eqref{eq:l_inf_theorem_proof_2} for $\epsilon \geq 0$ yields 
	\begin{align*}
		\bmat{\m \Omega + \epsilon\gamma^2\mI &*&*\\
			\mP&-\epsilon\mI&*\\\m {B_{\mathrm{w}}}^{\top}\mP-\m {D_{\mathrm{w}}}^{\top}\mL^{\top}\mP & \mO &-\alpha\mu_0\mI}\preceq 0.
	\end{align*}
	Defining $\mY \triangleq \mP\mL$ and $\m\Psi \triangleq \m \Omega + \epsilon\gamma^2\mI $, the above is equivalent to \eqref{eq:l_inf_theorem_1}. Next, substituting $\m z = \mZ \m e$ to \eqref{eq:l_inf_lemma_2} yields
	\begin{align}
		\norm{\mZ \m e}_2^2 -\mu_1V(\m e)-\mu_2\norm{\m w}_2^2&\leq 0 \nonumber\\
		\Leftrightarrow\m e^{\top}\mZ^{\top}\mZ \m e-\mu_1\m e^{\top}\mP\m e -\mu_2 \m w^{\top} \m w &\leq 0. \nonumber
	\end{align}
	By using congruence transformation and applying  the Schur Complement, the above is equivalent to \eqref{eq:l_inf_theorem_2}. Thus, the solvability of optimization problem \eqref{eq:l_inf_theorem} ensures that \eqref{eq:l_inf_lemma} is satisfied, which consequently implies that the estimation error dynamics given in \eqref{eq:est_error_dynamics} is $\mathcal{L}_{\infty}$ stable with performance level $\mu = \sqrt{\mu_0\mu_1+\mu_2}$ and observer gain $\mL = \mP^{-1}\mY$. 
\end{proof}
}

\begin{IEEEbiography}
	[{\includegraphics[width=1in,height=1.25in,clip,keepaspectratio]{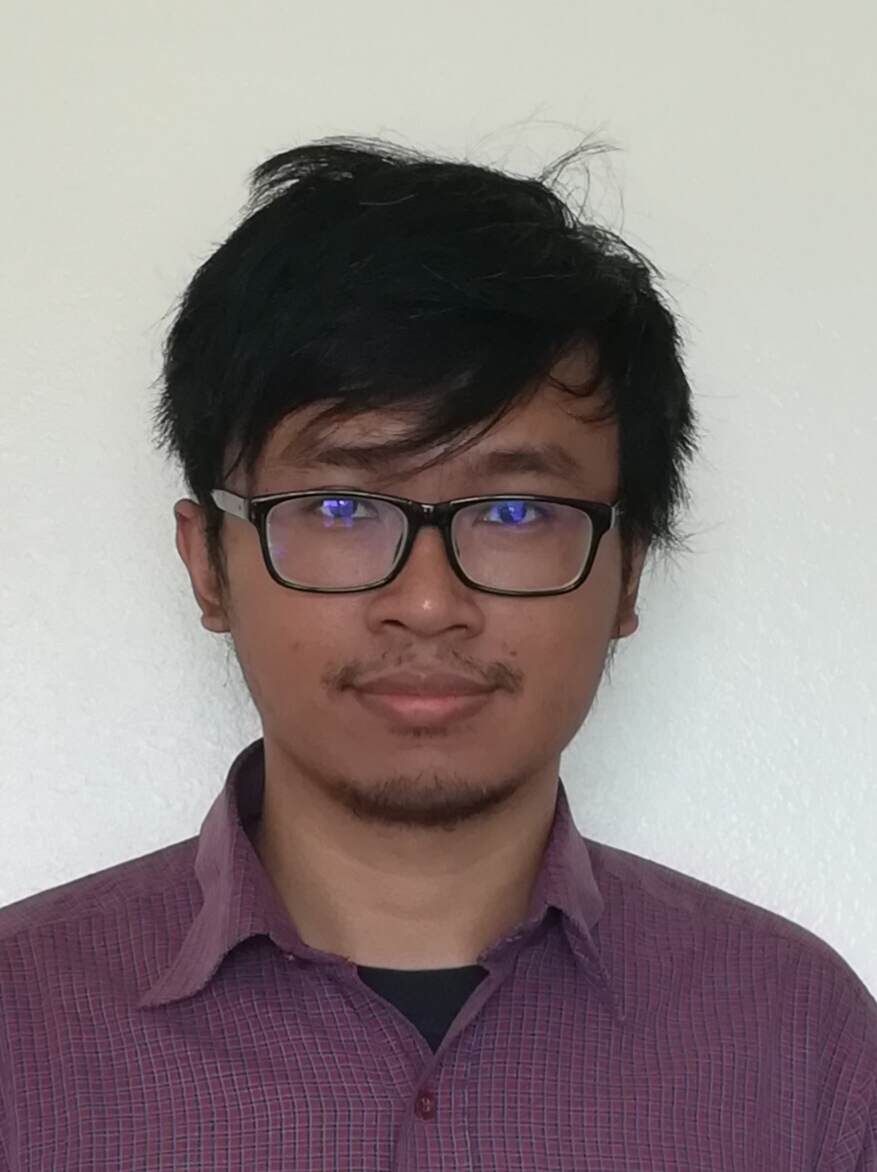}}]
	{Sebastian A. Nugroho} was born in Yogyakarta, Indonesia and received the B.S. and M.S. degrees in Electrical Engineering from Institut Teknologi Bandung (ITB), Indonesia in 2012 and 2014. He is currently a graduate research assistant and pursuing the Ph.D. degree in Electrical Engineering at the University of Texas, San Antonio (UTSA), USA. 
	His main areas of research interest are control theory, state estimation, and engineering optimization with applications to cyber-physical systems. 
	He received the Valero PhD Competitive Research Scholarship Awards from 2017 to 2019. 
\end{IEEEbiography} 

\vskip -2\baselineskip plus -1fil

\begin{IEEEbiography}
	[{\includegraphics[width=1in,height=1.25in,clip,keepaspectratio]{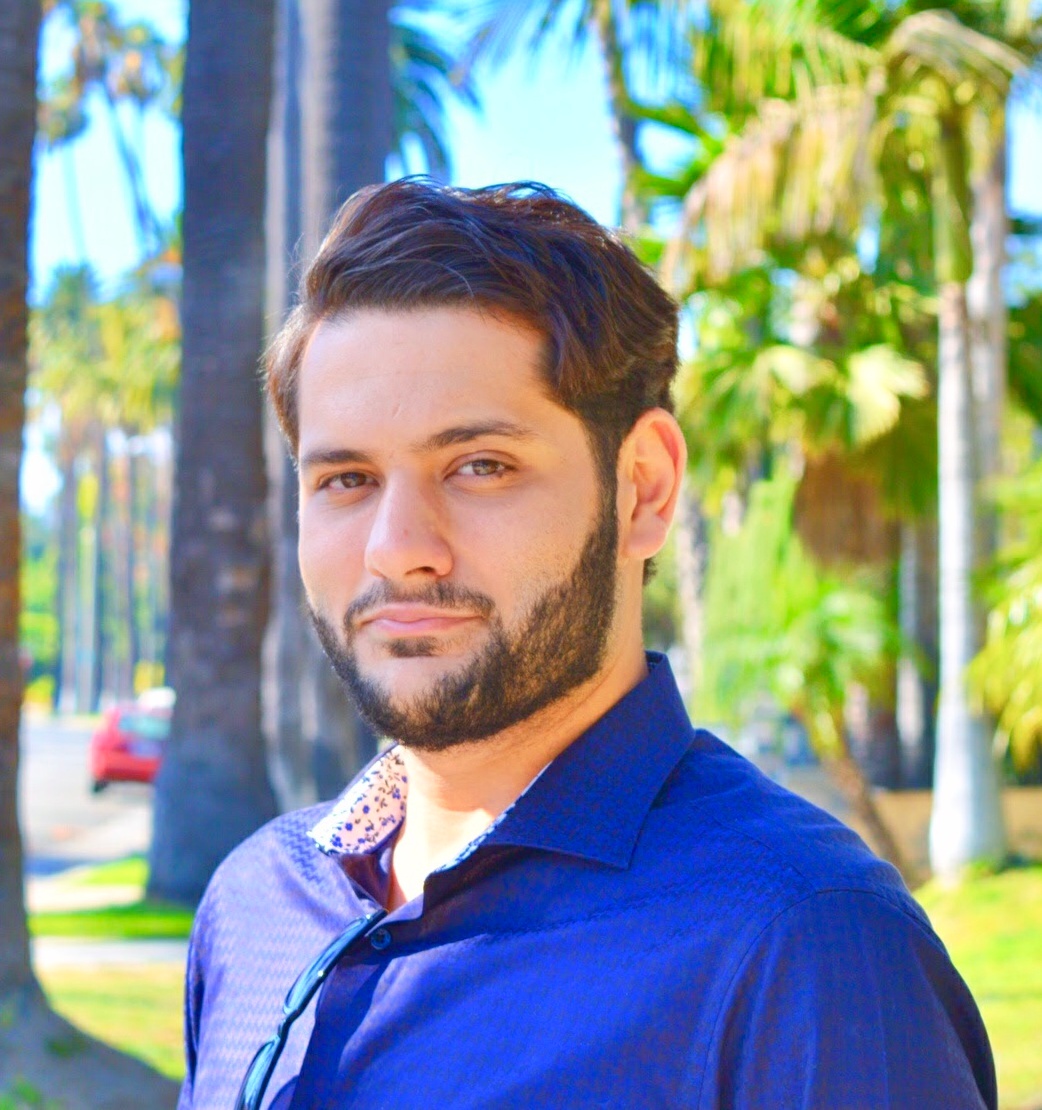}}]
	{Ahmad F. Taha}  is  an assistant professor with the Department of Electrical and Computer Engineering at the University of Texas, San Antonio. He received the B.E. and Ph.D. degrees in Electrical and Computer Engineering from the American University of Beirut, Lebanon in 2011 and Purdue University, West Lafayette, Indiana in 2015. Dr. Taha is interested in understanding how complex cyber-physical systems (CPS) operate, behave, and \textit{misbehave}. His research focus includes optimization, control, and security of CPSs with applications to power, water, and transportation networks. Dr. Taha is an editor of IEEE Transactions on Smart Grid and the editor of the IEEE Control Systems Society Electronic Letter (E-Letter).
\end{IEEEbiography}

\vskip -2\baselineskip plus -1fil

\begin{IEEEbiography}
	[{\includegraphics[width=1in,height=1.25in,clip,keepaspectratio]{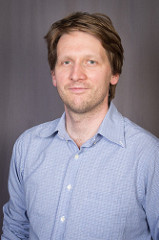}}]
	{Christian Claudel} is an Assistant Professor of Civil, Architectural and Environmental Engineering at UT-Austin. He received the PhD degree in Electrical Engineering from UC-Berkeley in 2010, and the MS degree in Plasma Physics from Ecole Normale Superieure de Lyon in 2004. He received the Leon Chua Award from UC-Berkeley in 2010 for his work on the Mobile Millennium traffic monitoring system. His research interests include control and estimation of distributed parameter systems, wireless sensor networks and unmanned aerial vehicles.
\end{IEEEbiography}

\end{document}